\newcommand{\bb}[1]{\mathbb{#1}}
\newcommand{\tr}{\text{tr}}
\newcommand{\vertiii}[1]{{\left\vert\kern-0.25ex\left\vert\kern-0.25ex\left\vert #1 
    \right\vert\kern-0.25ex\right\vert\kern-0.25ex\right\vert}}
\newtheorem{theorem}{Theorem} 
\newtheorem{remark}[theorem]{Remark}
\newtheorem{proposition}[theorem]{Proposition}
\newtheorem{example}[theorem]{Example}
\begin{document}
\title{Quantum reservoir computing in finite dimensions}
\author{Rodrigo Mart\'inez-Pe\~na}
\email{rmartinez@ifisc.uib-csic.es}
\affiliation{Instituto de F\'{i}sica Interdisciplinar y Sistemas Complejos (IFISC, UIB-CSIC), Campus Universitat de les Illes Balears, 07122 Palma de Mallorca, Spain}
\author{Juan-Pablo Ortega}
\email{juan-pablo.ortega@ntu.edu.sg}
\affiliation{Division of Mathematical Sciences, Nanyang Technological University, 21 Nanyang Link, Singapore 637371}

\date{ \today }

\begin{abstract}
Most existing results in the analysis of quantum reservoir computing (QRC) systems with classical inputs have been obtained using the density matrix formalism. This paper shows that alternative representations can provide better insights when dealing with design and assessment questions. More explicitly, system isomorphisms are established that unify the density matrix approach to QRC with the representation in the space of observables using Bloch vectors associated with Gell-Mann bases. It is shown that these vector representations yield state-affine systems (SAS) previously introduced in the classical reservoir computing literature and for which numerous theoretical results have been established. This connection is used to show that various statements in relation to the fading memory (FMP) and the echo state (ESP) properties are independent of the representation, and also to shed some light on fundamental questions in QRC theory in finite dimensions. In particular, a necessary and sufficient condition for the ESP and FMP to hold is formulated using standard hypotheses, and contractive quantum channels that have exclusively trivial semi-infinite solutions are characterized in terms of the existence of input-independent fixed points. 
\end{abstract}

\keywords{Suggested keywords}

\maketitle

\section{Introduction}
The development of noisy intermediate-scale quantum (NISQ) devices is attracting a great deal of attention from the quantum community. Recent advancements in fields such as quantum computation \cite{arute2019quantum}, quantum simulation \cite{choi2016exploring}, and quantum communications \cite{chen2021integrated} are just examples of the prosperous future that awaits these technologies. Nevertheless, NISQ devices are already demonstrating in the meantime that they can be very useful for diverse research fields like physics, chemistry, and optimization \cite{bharti2022noisy}, even providing quantum advantage \cite{arute2019quantum,zhong2020quantum,madsen2022quantum}. Machine learning (ML) is another example of thriving synergies with NISQ technologies. In this context, {\it quantum machine learning} (QML) aims to exploit the specific features of quantum mechanics to obtain an advantage over its classical counterparts when dealing with machine learning tasks, both with classical and quantum data \cite{biamonte2017quantum}. There is already positive evidence in this direction, both from a theoretical point of view in the fault-tolerant picture \cite{liu2021rigorous} and in experiments \cite{huang2022quantum}.   

The flexibility and range of action of ML and QML techniques is typically specified by the so-called {\it universality approximation theorems}. A universal approximation property takes place when a proposed restricted family of functions can approximate any function in a much larger class with arbitrary precision. There are many results of this type that are part of classical analysis dealing with, for instance, polynomials and Fourier series, and others that were added in the early days of ML like, for example, feed-forward neural networks \cite{cybenko1989approximation,hornik1989multilayer,hornik1991approximation}. Further results of this type have been proved for various ML paradigms like recurrent neural networks \cite{schafer2006recurrent}, support vector machines \cite{hammer2003note}, extreme learning machines \cite{huang2006universal, RC12}, or kernel methods \cite{steinwart2001influence, micchelli2006universal, RC25}. Universality results have also been obtained in the QML context, such as for one qubit algorithms \cite{perez2020data,perez2021one}, and general quantum circuits \cite{schuld2021effect,goto2021universal}. A framework in which we are particularly interested is {\it quantum reservoir computing} (QRC). As in classical reservoir computing (RC) \cite{jaeger2001echo, lukosevicius, tanaka:review}, QRC harnesses the rich dynamics of (quantum) dynamical systems to solve tasks where memory and prediction capabilities are required. Examples of application of these techniques are found in the prediction of chaotic time-series \cite{jaeger2004harnessing, GHLO2014, shahi2022prediction} and complex spatiotemporal dynamics \cite{pathak2018, vlachas2020backpropagation,tran2020higher, wikner2021using}. Since the first work on QRC \cite{fujii2017harnessing} many have followed (see \cite{ghosh2021quantum,mujal2021opportunities} for reviews). This includes both experimental implementations \cite{chen2020temporal,dasgupta2020designing,suzuki2022natural,kubota2022quantum,molteni2022optimization} as well as theoretical contributions on the universal approximation question \cite{chen2019learning,chen2020temporal,nokkala2021gaussian}. The latter was inspired by the works in the classical framework \cite{grigoryeva2018echo,grigoryeva2018universal,gonon2019reservoir, RC20}, where the discrete-time setting of RC theory fits well within the quantum dynamical map description.

The universal approximation property in the RC context brings to the table some conditions that dynamical systems should meet to ensure it. The most prominent ones are the {\it echo state property} (ESP) \cite{jaeger2001echo} and the {\it fading memory property} (FMP) \cite{boyd1985fading}.  A system has the FMP if inputs that are close in the recent past produce outputs that are also close, independently of what happened in the distant past. The ESP guarantees that a well-defined input/output map can be associated with our system and amounts to an existence and uniqueness property with respect to the input sequence that is fed into the system. The ESP and FMP are standard requirements in many stochastic and deterministic learning paradigms since they mathematically encode the asymptotic decorrelation (and even independence) between physical states and initial conditions that most physical systems exhibit as time goes by. There are many physical mechanisms that lead to the declining relevance of a given initial condition in a system state as the temporal distance between them increases. For instance, two pervasive phenomena in applications in this direction are chaos (high sensitivity to initial conditions) and dissipation.  Fading memory is an important modeling feature when using physical systems because a reservoir system can only store a finite amount of information in its trainable parameters \cite{wright2020capacity}; this implies that information must be erased as time goes by in order to store new input information. Under very mild mathematical conditions, ESP and FMP are equivalent to the input-forgetting property \cite{RC9} which mathematically encodes the information removal process that is needed to learn the newly fed one.

The ESP property has been studied in the context of QRC in different works. Nurdin and Chen \cite{chen2019learning} provided sufficient conditions for the ESP (and FMP) to hold in terms of the contractivity of the quantum map acting on a restricted domain  \cite{chen2019learning,chen2020temporal,chen2022nonlinear}, while Tran and Nakajima also define the ESP using contractivity of the quantum map but without considering restrictions. They also numerically connect the ESP with the spectra of the quantum maps \cite{tran2020higher,tran2021learning}. Both approaches describe the ESP in the density matrix language, that is, the dynamical equations are defined in the space of quantum states. However, this description might be limited somehow, because, as we will see later on, more information can be extracted if we choose a description in terms of observables.

This paper aims to fill some gaps in the description of QRC with finite dimensional systems and classical inputs. More precisely, we first unify the quantum ESP defined in previous works \cite{chen2019learning,chen2020temporal,chen2022nonlinear,tran2020higher,tran2021learning} using the norm of quantum maps, with the classical notion of ESP using observables, and show that they are equivalent for finite-dimensional quantum systems. We extend in passing these results to the FMP.  More specifically, we establish explicit system isomorphisms between the description of QRC systems using the space of density matrices as state space and the one that uses the space of Bloch vectors associated to Gell-Mann bases. That choice of basis, which is customary in the study of quantum systems, happens to yield non-homogeneous affine state dynamics (the corresponding systems are called {\it state-affine} (SAS)) of the type introduced in \cite{grigoryeva2018universal}. The connection between QRC and  SAS systems has important consequences.
Our results in that context are contained in Proposition \ref{prop:sufficient} and in Theorems \ref{th:unital} and \ref{th:constant}. The proposition presents a necessary and sufficient condition for the ESP and FMP to hold in different representations with just a few requirements, such as the compactness of the input space. We emphasize that this hypothesis is satisfied in most  RC tasks and when dealing with implementations of RC systems with dedicated hardware since the experimental ranges of the physical systems involved are always finite \cite{tanaka:review}. The theorems exhibit common situations that should be avoided in the design of quantum channels so that fully operational QRC systems are obtained. We shall work in an idealized framework in which observables are obtained after an infinite number of measurements, with no statistical error. Even in such an idealized setting, we expect that our results can contribute to the general understanding of QRC experiments in finite dimensional systems, as they have already been implemented in \cite{chen2020temporal,dasgupta2020designing,kubota2022quantum,molteni2022optimization}.

The structure of the paper is as follows. Section \ref{sec:definitions} introduces the general framework and the definitions that will be needed along the paper. Definitions of the spaces of operators and quantum maps are included in Section \ref{sec:q_definitions}, while all the RC ingredients will be presented in Section \ref{sec:rc_definitions}, together with some preliminary results. The main results are contained in Section \ref{sec:results}. Section  \ref{sec:discussion} includes a brief discussion on some of the consequences of Theorems \ref{th:unital} and \ref{th:constant}, and Section \ref{sec:conclusions} concludes the paper.

\section{Definitions}\label{sec:definitions}

\subsection{Quantum definitions}\label{sec:q_definitions}

We start by defining the space of quantum systems. See, for example, \cite{redei2007quantum,bengtsson2017geometry} for further details. Consider a complex Hilbert space $\mathcal{H}$. The set of all {\it bounded operators} $\mathcal{B}(\mathcal{H})$ that act on $\mathcal{H}$ is a complex vector space under point-wise addition and scalar multiplication, and it forms an algebra under composition. If we add the involution $A\rightarrow A^{\dagger}$ given by the adjoint operation, then $\mathcal{B}(\mathcal{H})$ is also a  $C^*$-algebra with respect to the {\it operator norm} $\vertiii{\cdot }_{op}$ defined by 
\begin{equation}
   \vertiii{A}_{op} :=\sup_{||\psi||=1}\{||A\psi||,\ \psi\in \mathcal{H} \}, 
\end{equation}
where $A\in \mathcal{B}(\mathcal{H})$. Along this manuscript, we will denote all the induced operator and matrix norms with the symbol $\vertiii{\cdot }$.  The predual of $\mathcal{B}(\mathcal{H})$ is the Banach space $\mathcal{T}(\mathcal{H})$ of all {\it trace-class operators} on $\mathcal{H}$ that have finite {\it trace norm}  $||A||_1:=\tr\sqrt{AA^{\dagger}}$. The trace norm is a particular case (with $p=1 $) of the {\it Schatten norms} defined by $||A||_p:=\left(\tr\left(\left(\sqrt{AA^{\dagger}}\right) ^p\right)\right)^{1/p}$.
The space of quantum {\it density matrices} $\mathcal{S}(\mathcal{H})$ is a compact convex subset (see Section \ref{sec:rc_definitions} and the argument above \eqref{inclusions spaces w}) of the normed vector space $\mathcal{T}(\mathcal{H})$ defined by 
\begin{equation}
    \mathcal{S}(\mathcal{H})=\{\rho\in\mathcal{T}(\mathcal{H})\ |\ \rho^{\dagger}=\rho,\ \rho\geq 0,\ \tr(\rho)=1\}.
\end{equation} 
As $\mathcal{S}(\mathcal{H})$ is a closed subset of the Banach space $\mathcal{T}(\mathcal{H})$, it is then a complete metric space when using the distance induced by $\left\|\cdot \right\|_1$. All along this paper, we restrict ourselves to finite-dimensional Hilbert spaces, for which the spaces of bounded and trace class operators coincide, and we shall use the symbol $\mathcal{B}(\mathcal{H})$ to refer to both of them. Moreover, in that case, $\mathcal{S}(\mathcal{H})$ is a complete metric space with respect to any norm since all the norms are equivalent. We shall reserve the symbol $d \in {\bb N} $ for the dimension of ${\mathcal H}$.

We now introduce the notion of {\it quantum channel}. All definitions and further properties of these maps can be found in \cite{raginsky2002dynamical,burgarth2007generalized,burgarth2013ergodic,wolf2012quantum} and references therein.
A quantum channel is a linear map $T: \mathcal{B}(\mathcal{H})\rightarrow \mathcal{B}(\mathcal{H})$ that is {\it completely positive and trace preserving} (CPTP). We recall here that a {\it trace preserving} map $T$ is the one that satisfies that $\tr(T(A))=\tr(A)$ for any $A \in \mathcal{B}({\mathcal H})$. Moreover, we say that $T$ is {\it positive} when it maps positive semi-definite operators to positive semi-definite operators. Finally, {\it completely positive} maps $T$ are those positive maps which, when extended to a larger space using the tensor map $T\otimes I_k$, with $I_k$ the identity map in dimension $k$, they also yield a positive map for any $k \in \mathbb{N}$. A linear map is CPTP if and only if it is possible to find a {\it Kraus decomposition} associated to a set of operators $\{K_i\}_{i\in X}$ such that for all $A\in \mathcal{B}(\mathcal{H})$:
\begin{equation}
\label{kraus decomposition}
    T(A)=\sum_{i\in X}K_iAK^{\dagger}_i,
\end{equation}
where $\sum_{i\in X}K^{\dagger}_iK_i=I$ and $X$ is an index set of cardinality at most $d^2 $, with $d $ the dimension of ${\mathcal H} $. CPTP maps obviously leave $\mathcal{S}(\mathcal{H})$ invariant and hence induce a restricted map $T: \mathcal{S}(\mathcal{H})\longrightarrow\mathcal{S}(\mathcal{H})$ that we shall denote with the same symbol and use interchangeably. Note that, unlike $\mathcal{B}({\mathcal H}) $, the set $\mathcal{S}({\mathcal H}) $ is not a vector space, and hence it is only when use the map $T: \mathcal{B}(\mathcal{H})\longrightarrow\mathcal{B}(\mathcal{H})$ that we can talk about matrix expressions and eigenvalues for the operator $T$. It can be shown that any CPTP map $T: \mathcal{S}(\mathcal{H})\longrightarrow\mathcal{S}(\mathcal{H})$ is {\it non-expansive} in the trace norm, which means that after applying $T$ to two input states $\rho_1,\rho_2 \in \mathcal{S}(\mathcal{H})$, the distance between these density matrices is either contracted or remains equal:
\begin{equation}
\label{contraction condition}
    ||T(\rho_1)-T(\rho_2)||_{1}\leq||\rho_1-\rho_2||_1. 
\end{equation}
We recall that the {\it eigenvalues} of a CPTP map $T: \mathcal{B}(\mathcal{H})\rightarrow \mathcal{B}(\mathcal{H})$ are the complex numbers $\lambda$ that make the map $T-\lambda\, \text{id}$ non-invertible. We will denote this set by $\text{spec}(T)$. The {\it spectral radius} $\rho(T):=\max\{|\lambda|\ \mid \lambda\in\text{spec}(T) \}$ of any CPTP map is one as a consequence of \eqref{contraction condition} and of the fact that $1$ is always an eigenvalue. Additionally, the set $\text{spec}(T)$ is invariant under complex conjugation (see \cite{burgarth2013ergodic}).
On the other hand, since the finite-dimensional vector space $\mathcal{B}(\mathcal{H})$ is isomorphic to $\bb{C}^{d^2}$, we can represent $T: \mathcal{B}(\mathcal{H})\rightarrow \mathcal{B}(\mathcal{H})$ as a $d^2\times d^2$ matrix, which we write as $\widehat{T}$. This matrix representation can be obtained by fixing an orthonormal basis in $\mathcal{B}(\mathcal{H})$ for the Hilbert-Schmidt inner product, that is, $\{B_i\in\mathcal{B}(\mathcal{H})\}$ with $\tr(B_i^{\dagger}B_j)=\delta_{ij}$, and by setting $\widehat{T}_{ij}=\tr(B_i^{\dagger}T(B_j))$. 
The spectrum of the matrix $\widehat{T}$ is given by the roots of the characteristic polynomial $\det(\widehat{T}-\lambda I)=0$ and, since we are working with finite dimensions, it coincides with $\text{spec}(T)$, as well as with the set of complex values $\lambda$ that satisfy $T(X)=\lambda X$, for some non-trivial eigenvector $X\in \mathcal{B}(\mathcal{H})$  \cite{wolf2012quantum}.

We are particularly interested in the eigenvectors corresponding to the eigenvalue $\lambda =1$ and that we call {\it fixed points}, that is, they are elements $A\in\mathcal{B}(\mathcal{H})$ such that  $T(A)=A$. The set of fixed points of a CPTP map is always non-empty in the space of density matrices. This is a consequence of Schauder's Fixed-Point Theorem together with the continuity of $T$ and the compactness of $\mathcal{S}(\mathcal{H})$. We will work most of the time with quantum channels with single fixed points $\rho^*\in \mathcal{S}(\mathcal{H})$ in the space of density matrices. Such maps are called {\it ergodic}. Ergodicity of quantum channels is equivalent to the eigenspace of $T$ associated with the eigenvalue $\lambda=1  $ in $\mathcal{B}(\mathcal{H})$ having dimension one, and to being made out of the complex multiples of $\rho^*\in \mathcal{S}(\mathcal{H})$. In that case, we obviously have  (see Corollary 2 in \cite{burgarth2013ergodic}) that  $T(A)=A$ with $A\in \mathcal{B}(\mathcal{H})$  if and only if $A=\tr(A)\rho^*$. It is worth mentioning that the rest of the eigenvectors of an ergodic CPTP map must be traceless. Indeed, given an eigenvalue $\lambda\neq 1 $ of the CPTP map $T$ and $A$ a corresponding eigenvector, we find that $\tr(A)=\tr(T(A))=\lambda\tr(A)$, which implies that $ (\lambda -1)\tr(A)=0$, and hence that $ \tr(A)=0$. Therefore, the spectral set of a CPTP map can be decomposed as:
\begin{equation}
    \text{spec}(T)=\{1\}\cup \text{spec}(T|_{\mathcal{B}_0(\mathcal{H})}), 
\end{equation}
where $\mathcal{B}_0(\mathcal{H})\subset \mathcal{B}(\mathcal{H})$ is defined as the vector subspace of traceless operators, where the restriction $T|_{\mathcal{B}_0(\mathcal{H})}$ and its corresponding matrix representation are obviously well defined.

Repeated applications of an ergodic CPTP map do not necessarily converge to a fixed point (see \cite{burgarth2013ergodic} for an example). If convergence to a fixed point takes place, we say that the CPTP map is {\it mixing}. More specifically, a CPTP map is mixing if and only if its repeated applications converge in the trace norm, that is,
\begin{equation}\label{eq:mixing}
    \lim_{n\rightarrow \infty}||T^n(\rho)-\rho^*||_1=0, \quad \forall \rho\in  \mathcal{S}(\mathcal{H}).
\end{equation}
Equation~\eqref{eq:mixing} implies that the sequence $\{T^n(\rho)\}$ converges to $\rho^*$ with respect to the trace norm, but in infinite-dimensional situations this does not necessarily imply that convergence takes place with respect to other norms (see Definition 5.4.1 in \cite{horn2012matrix}). However, in the finite-dimensional case mixing becomes a topological property where $\lim_{n\rightarrow \infty} T^n(\rho)=\rho^*$ for any norm and for any $\rho\in  \mathcal{S}(\mathcal{H}) $ \cite{burgarth2007generalized}.

 Although all mixing maps are ergodic, the converse is not true in general. However, for continuous-time Markovian evolution, both are equivalent and such maps are called {\it relaxing} \cite{burgarth2013ergodic}. Another important consequence of mixing condition is the following: a CPTP map is mixing if and only if the fixed point $\rho^*$ is the only eigenvector with eigenvalue $|\lambda|=1$. Then, it is straightforward to show that 
\begin{equation}
\label{condition to be mixing}
\max\{|\lambda|\mid \lambda\in\text{spec}(T|_{\mathcal{B}_0(\mathcal{H})}) \}<1. 
\end{equation}

We say that a CPTP map is called {\it primitive} when it is mixing and its unique fixed point $\rho^*>0$ has full rank. 
The smallest natural number $n$ for which $T^n$ sends positive semidefinite matrices to positive definite matrices is called the {\it index of primitivity} of $T$ and is denoted by $\omega(T)$. With that notation, we say that  $T^{\omega(T)}$ is a strictly positive map. Bounds for this number are given by the quantum version of the Wietland inequality \cite{sanz2010quantum,rahaman2019new}. 

Finally, we say that a quantum channel is {\it strictly contractive} when
\begin{equation}
\label{contraction}
    ||T(\rho_1)-T(\rho_2)||_{1}\leq r||\rho_1-\rho_2||_1
\end{equation}
for all $\rho_1,\rho_2\in \mathcal{S}(\mathcal{H})$, where $0\leq r<1$. As it is customary in the quantum channels literature, we reserve the term strictly contractive for the trace norm, unless a different norm is explicitly specified. It can be shown that strictly contractive channels 
with a full-rank fixed point must be primitive. 
Indeed, the contractivity condition, together with Banach's Fixed Point Theorem (using that the convex closed subset of density matrices with the trace norm is a complete metric space) guarantees that the channel is mixing. A mixing channel with a strictly positive fixed point is a primitive channel. The converse holds when the primitive map becomes strictly positive, that is, it sends positive semidefinite matrices to positive definite ones: given a primitive channel $T$, $T^{\omega(T)}$ is strictly contractive  (see Theorem VI.3 in \cite{rahaman2019new}). Contractive maps can be also constructed by composing a strictly contractive channel with a general CPTP map.

An important conclusion that can be drawn from all these considerations is that strictly contractive channels, which will be relevant along this work, can be constructed by either using a map like $T^{\omega(T)}$, where $T$ is a primitive channel, or by composing a strictly contractive channel with any other CPTP map, as it has been done in some examples in QRC \cite{kubota2022quantum,suzuki2022natural}. Examples of mixing/relaxing CPTP maps with full-rank single fixed points can be found in the Lindblad-Gorini-Kossakowski-Sudarshan equation (see \cite{nigro2019uniqueness} for a summary of the necessary and sufficient conditions). As these maps belong to the quantum Markov semigroup, iterative applications of the channels yield $T_{\Delta \tau }^{\omega(T)}=T_{\omega(T)\Delta \tau }$, where $\Delta \tau $ represents the time of action of the map $T_{\Delta \tau }$. Therefore, taking $\Delta \tau '\geq\omega(T)\Delta \tau $ we obtain a strictly contractive map.

\subsection{RC definitions}
\label{sec:rc_definitions}

We now define quatum reservoir computing (QRC) systems in the density matrix formalism. Classical RC definitions and mathematical details can be found in, for instance, \cite{grigoryeva2018echo}. QRC maps are determined by two equations, namely, the {\it state-space} and the {\it readout} or {\it observation} equations.  The state equation is given by a family of CPTP maps $T:\mathcal{B}(\mathcal{H})\times \bb{R}^n\rightarrow \mathcal{B}(\mathcal{H})$, with $n\in \bb{N}$ being the number of input features, which are taken to be real values (classical inputs). The maps $T$ and $h $ will be, most of the time, tacitly assumed to be continuous.  The output is obtained from the readout map $h:\mathcal{B}(\mathcal{H})\rightarrow \bb{R}^m$, with $m\in \bb{N}$, which maps  operators in $\mathcal{B}(\mathcal{H})$ to the Euclidean space $\bb{R}^m$. 
Inputs are typically bi-infinite discrete-time sequences of the form ${\bf z}=(\dots,{\bf z}_{-1},{\bf z}_0,{\bf z}_1,\dots)\in (\bb{R}^n)^{\bb{Z}}$, and outputs $\textbf{y}\in(\bb{R}^m)^{\bb{Z}}$ have the same structure. A QRC system is hence determined by the state-space transformations:
\begin{equation} \label{eq:QRC1}
    \begin{cases}
    &A_t=T(A_{t-1},{\bf z}_t),\\
    &\textbf{y}_t=h(A_t),
    \end{cases}
\end{equation} 
where $t\in\bb{Z}$ denotes the time index. Analogously, one can define the same setting for semi-infinite discrete-time sequences: $(\bb{R}^n)^{\bb{Z}_-}=\{{\bf z}=(\dots,{\bf z}_{-1},{\bf z}_0)\ |\ {\bf z}_i\in \bb{R}^n, i \in \bb{Z}_- \}$ for left-infinite sequences and $(\bb{R}^n)^{\bb{Z}_+}=\{{\bf z}=({\bf z}_0,{\bf z}_1,\dots)\ |\ {\bf z}_i\in \bb{R}^n, i \in \bb{Z}_+ \}$ for right-infinite sequences. Similar definitions apply to $(D_n)^{\bb{Z}}$, $(D_n)^{\bb{Z}_-}$, and $(D_n)^{\bb{Z}_+}$ with elements in the subset $D_n\subset \bb{R}^n$. We can also construct sequence spaces $(\mathcal{B}(\mathcal{H}))^{\bb{Z}} $ for the space of bounded (trace-class) operators:
\begin{multline*}
    (\mathcal{B}(\mathcal{H}))^{\bb{Z}}= \{\textbf{A}=(\dots,A_{-1},A_0,A_1,\dots)\\
\mid A_i\in\mathcal{B}(\mathcal{H}),\ i\in\bb{Z}\}. 
\end{multline*}
Analogous definitions for $(\mathcal{B}(\mathcal{H}))^{\bb{Z}_-}$,$(\mathcal{B}(\mathcal{H}))^{\bb{Z}_+}$, $(\mathcal{S}(\mathcal{H}))^{\bb{Z}}$, $(\mathcal{S}(\mathcal{H}))^{\bb{Z}_-}$, and $(\mathcal{S}(\mathcal{H}))^{\bb{Z}_+}$  follow immediately.

A natural way to construct CPTP state-space transformations is to insert the input dependence using the Kraus decomposition that we introduced in \eqref{kraus decomposition}, that is, 
\begin{equation}
\label{Kraus with inputs}
T(A, {\bf z})=\sum_{i\in X}K_i({\bf z})AK^{\dagger}_i({\bf z}),
\end{equation}
and for any input ${\bf z} \in \bb{R}^n$, the matrices $\{K_i({\bf z}) \}_{i \in X}$ satisfy that $\sum_{i\in X}K^{\dagger}_i({\bf z})K_i({\bf z})=I$. The by-design CPTP character of the map $T:\mathcal{B}(\mathcal{H})\times \bb{R}^n\rightarrow \mathcal{B}(\mathcal{H})$ in \eqref{Kraus with inputs} implies that it naturally restricts to a state equation $T:\mathcal{S}(\mathcal{H})\times \bb{R}^n\rightarrow \mathcal{S}(\mathcal{H})$ with density matrices as state space, that we shall use interchangeably in the sequel and denote using the same symbol.

\medskip

\noindent \textit{The Echo State Property (ESP)}. Consider the QRC system defined in~\eqref{eq:QRC1} or its analog for the subsets $\mathcal{S}(\mathcal{H})\subset \mathcal{B}(\mathcal{H})$ and $D_n\subset \bb{R}^n$, that is,  $T:\mathcal{S}(\mathcal{H})\times D_n\rightarrow \mathcal{S}(\mathcal{H})$. Given an input sequence ${\bf z} \in (D_n)^{\bb{Z}}$, we say that $\boldsymbol{\rho} \in  (\mathcal{S}(\mathcal{H}))^{\bb{Z}} $ is a {\it solution} of~\eqref{eq:QRC1} for the input ${\bf z} $ if the  components of the sequences ${\bf z} $ and $\boldsymbol{\rho}$ satisfy the first relation in~\eqref{eq:QRC1} for any $t \in \Bbb Z$.  We say that the QRC system has the {\it echo state property} (ESP) when it has a unique solution for each input ${\bf z} \in (D_n)^{\bb{Z}}$. More explicitly, for each ${\bf z}\in (D_n)^{\bb{Z}}$, there exists a unique sequence $\boldsymbol{\rho}\in (\mathcal{S}(\mathcal{H}))^{\bb{Z}}$ such that 
\begin{equation}
\label{eq:QRC2}
    \rho_t=T(\rho_{t-1},{\bf z}_t), \ \text{for all} \ t\in\bb{Z}.
\end{equation}

\medskip

\noindent {\it Filters and functionals}. Let $\mathcal{S}(\mathcal{H})\subset \mathcal{B}(\mathcal{H})$ be the space of density matrices and let $D_n\subset \bb{R}^n$ be a subset in the input space. A map of the type $U: (D_n)^{\bb{Z}} \rightarrow (\mathcal{S}(\mathcal{H}))^{\bb{Z}}$ is called a {\it filter} associated to the QRC system~\eqref{eq:QRC1} when it satisfies that 
\begin{equation*}
U({\bf z}) _t=T \left(U({\bf z})_{t-1}, {\bf z} _t\right),\ \mbox{for all ${\bf z} \in (D_n)^{\bb{Z}}$ and $t \in \Bbb Z $.}
\end{equation*}
Filters induce what we call {\it functionals} $H: (D_n)^{\bb{Z}} \rightarrow \mathcal{S}(\mathcal{H})$ via the relation $H({\bf z})=U({\bf z}) _0 $. It is clear that a uniquely determined filter can be associated with a QRC system that satisfies the ESP. The filter maps, in that case, any input sequence to the unique solution of the QRC system associated with it. A filter is called {\it causal} if it only produces outputs that depend on present and past inputs. More formally, causality means that for any two inputs ${\bf z},\textbf{v}\in (D_n)^{\bb{Z}}$ that satisfy ${\bf z}_{\tau}=\textbf{v}_{\tau}$ for any $\tau\leq t$, for a given $t\in\bb{Z}$, we have $U({\bf z})_t=U(\textbf{v})_t$. The filter $U$ is called {\it time-invariant} if there is no explicit time dependence on the system that determines it, that is, it commutes with the {\it time delay operator} defined as $\mathcal{T}_{\tau}({\bf z})_t:={\bf z}_{t-\tau}$. Filters associated to QRC systems of the type~\eqref{eq:QRC1} are always causal and time-invariant (Proposition 2.1 in \cite{grigoryeva2018echo}). As noted in previous works \cite{boyd1985fading, grigoryeva2018echo, grigoryeva2018universal}, there is a bijection between causal and time-invariant filters and functionals on $(D_n)^{\bb{Z}_-}$.  Then, we can restrict our work to causal and time-invariant filters with target and domain in spaces of left-infinite sequences. 

\medskip

\noindent \textit{The Fading Memory Property (FMP)}. 
Infinite product spaces can be endowed with Banach space structures associated to the supremum norm and weighted norms. The supremum norm for input sequences is defined as $||{\bf z}||_{\infty}:=\sup_{t\in \bb{Z}}\{||{\bf z}_t||\}$, and for operator sequences as $||\textbf{A}||_{\infty}:=\sup_{t\in \bb{Z_-}}\{||A_t||\}$, where $||\cdot||$ represents a given vector and matrix norm, respectively. The symbols $l^{\infty}(\bb{R}^n)$, $l_{\pm}^{\infty}(\bb{R}^n)$, $l^{\infty}(\mathcal{B}(\mathcal{H}))$ and $l_{\pm}^{\infty}(\mathcal{B}(\mathcal{H}))$ denote the Banach spaces formed by the elements in the corresponding infinite product spaces with a finite supremum norm.

We now define the weighted norm. Let $w:\bb{N}\rightarrow (0,1]$ be a decreasing sequence with zero limit and $w_0=1$. The weighted norm $||\cdot||_w$ on $(\bb{R})^{\bb{Z_-}}$ is defined as 
\begin{equation}
    ||{\bf z}||_w:=\sup_{t\in \bb{Z}_-}\{w_{-t}||{\bf z}_t||\},
\end{equation}
and the space 
\begin{equation}
    l^{w}_-(\bb{R}^n)=\{{\bf z}\in (\bb{R}^n)^{\bb{Z_-}}|\ ||{\bf z}||_w<\infty\},
\end{equation}
with weighted norm $||\cdot||_w$ forms a Banach space (see Appendix A.2 in \cite{grigoryeva2018echo}).
In the same vein, we can define 
\begin{equation}
\begin{split}
    &||\textbf{A}||_w:=\sup_{t\in \bb{Z}_-}\{w_{-t}||A||\}, \\
    &l^{w}_-(\mathcal{B}(\mathcal{H}))=\{\textbf{A}\in (\mathcal{B}(\mathcal{H}))^{\bb{Z_-}}|\ ||\textbf{A}||_w<\infty\}.
    \end{split}
\end{equation}
It can be shown that $l^{w}_-(\mathcal{B}(\mathcal{H}))$ is a Banach space as well. 

We now turn to the space of density matrices $\mathcal{S}(\mathcal{H})  $ and recall that since for positive semidefinite matrices the trace operator is submultiplicative (see Exercise 7.2.P26 in \cite{horn2012matrix}) we can conclude that $ \operatorname{tr} (\rho ^2)\leq 1 $ for all $\rho \in \mathcal{S}(\mathcal{H})  $. This observation implies that the elements in $\mathcal{S}(\mathcal{H})  $ have  Frobenius norms bounded by one. Since we are in finite dimensions, this statement holds for any other matrix norm (eventually with a bounding constant different from one) and allows us to conclude that   
\begin{equation}
\label{inclusions spaces w}
(\mathcal{S}(\mathcal{H}))^{\bb{Z}_-}\subset  l^{\infty}_-(\mathcal{B}(\mathcal{H}))\subset l^{w}_-(\mathcal{B}(\mathcal{H})),
\end{equation}
for any weighting sequence $w$. The previous boundedness consideration, together with the closedness of $\mathcal{S}(\mathcal{H}) $ in $\mathcal{B}(\mathcal{H}) $ implies that $\mathcal{S}(\mathcal{H}) $ is necessary compact. An important consequence of this fact is that the relative topology induced by the $l^{w}_-(\mathcal{B}(\mathcal{H})) $ on  $(\mathcal{S}(\mathcal{H}))^{\bb{Z}_-} $ coincides with the product topology (see Corollary 2.7 in  \cite{grigoryeva2018echo}).

Take now a subset $D_n\subset \bb{R}^n$ such that  $(D_n)^{\bb{Z}_-}\subset l^{w}_-(\bb{R}^n)$ and consider a QRC system $T:\mathcal{S}(\mathcal{H})\times D_n\rightarrow \mathcal{S}(\mathcal{H})$ that has the ESP. We say that $T$ has the {\it fading memory property} (FMP) when the corresponding functional $H:(D_n)^{\bb{Z}_-} \rightarrow\mathcal{S}(\mathcal{H})$ is a continuous map between the metric spaces $((D_n)^{\bb{Z}_-},||\cdot||_w)$ and $((\mathcal{S}(\mathcal{H}))^{\bb{Z}_-},||\cdot||_w)$, for some weighting sequence $w$. If  $D_n$ is compact, once $H$ is continuous for a given weighting sequence $w$, then it is continuous for all  weighting sequences (see \cite[Theorem 2.6]{grigoryeva2018echo}).

The compactness of $\mathcal{S}(\mathcal{H}) $ implies that we can apply with straightforward modifications Theorem 3.1 in \cite{grigoryeva2018echo} to prove the following statement. 

\begin{proposition}
\label{prop:contractivity implies ESP and FMP}
    Let $D_n\subset \bb{R}^n$ be a compact subset of $\mathbb{R}^n $. Let $T:\mathcal{S}(\mathcal{H})\times D_n\rightarrow \mathcal{S}(\mathcal{H})$ be a continuous QRC system such that  the CPTP maps $T(\cdot,{\bf z}):\mathcal{S}(\mathcal{H})\rightarrow \mathcal{S}(\mathcal{H})$ are strictly contractive for all ${\bf z} \in D_n$ as in \eqref{contraction} with a common contraction constant $0\leq r<1$ associated to some norm in $\mathcal{B}(\mathcal{H})$ (not necessarily $\left\|\cdot \right\|_1 $).  Then, the QRC system induced by $T$ has the ESP and the FMP.
\end{proposition}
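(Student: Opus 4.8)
The plan is to verify that the hypotheses permit an almost verbatim transfer of the proof of Theorem 3.1 in \cite{grigoryeva2018echo} to the present operator-valued setting, the only genuine change being the replacement of a Euclidean reservoir state space by the compact metric space $\mathcal{S}(\mathcal{H})$. First I would fix once and for all the norm $\|\cdot\|$ on $\mathcal{B}(\mathcal{H})$ for which the common contraction constant $r$ holds. Since we are in finite dimensions all norms are equivalent and, as recalled above \eqref{inclusions spaces w}, $\mathcal{S}(\mathcal{H})$ is compact --- hence complete and of finite diameter --- with respect to any of them. This makes $\mathcal{S}(\mathcal{H})$ a complete metric space to which Banach's Fixed-Point Theorem applies, exactly as $\mathbb{R}^N$ does in the classical statement.

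For the ESP I would treat uniqueness and existence separately. For uniqueness, given two solutions $\boldsymbol{\rho},\boldsymbol{\rho}'$ for the same input ${\bf z}$, the contraction bound \eqref{contraction} gives $\|\rho_t-\rho'_t\|\le r\|\rho_{t-1}-\rho'_{t-1}\|\le\cdots\le r^k\|\rho_{t-k}-\rho'_{t-k}\|$; since the remaining factor is bounded by the diameter of $\mathcal{S}(\mathcal{H})$ and $r^k\to 0$, I conclude $\rho_t=\rho'_t$ for all $t$. For existence I would introduce the solution operator $F$ on the space of $\mathcal{S}(\mathcal{H})$-valued left-infinite sequences by $F(\boldsymbol{\rho})_t=T(\rho_{t-1},{\bf z}_t)$, whose fixed points are precisely the solutions. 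Choosing the geometric weighting sequence $w_k=\mu^k$ with $\mu\in(r,1)$, the contraction estimate yields $\|F(\boldsymbol{\rho})-F(\boldsymbol{\rho}')\|_w\le (r/\mu)\|\boldsymbol{\rho}-\boldsymbol{\rho}'\|_w$ with $r/\mu<1$, so that $F$ is a contraction on $(\mathcal{S}(\mathcal{H}))^{\mathbb{Z}_-}$. This space is complete because it is a closed --- indeed compact --- subset of the Banach space $l^{w}_-(\mathcal{B}(\mathcal{H}))$, using that its product and weighted-norm topologies coincide (the fact recorded above \eqref{inclusions spaces w}). Banach's theorem then supplies the unique solution, and causality together with time-invariance (Proposition 2.1 in \cite{grigoryeva2018echo}) promotes this to the bi-infinite ESP.

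For the FMP I would estimate the discrepancy between the solutions $\boldsymbol{\rho},\boldsymbol{\sigma}$ associated with two inputs ${\bf z},{\bf v}$. Splitting via the triangle inequality and using both the state contraction and the uniform continuity of $T$ on the compact set $\mathcal{S}(\mathcal{H})\times D_n$ (with modulus $\delta$), I obtain
\begin{equation}
\|\rho_t-\sigma_t\|\le r\,\|\rho_{t-1}-\sigma_{t-1}\|+\delta\!\left(\|{\bf z}_t-{\bf v}_t\|\right),
\end{equation}
and iterating this recursion down to $t=0$ with $r<1$ bounds $\|H({\bf z})-H({\bf v})\|=\|\rho_0-\sigma_0\|$ in terms of the recent input differences, which is exactly the continuity of the functional $H$ in the weighted norm. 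Compactness of $D_n$ then upgrades this to continuity for every weighting sequence through \cite[Theorem 2.6]{grigoryeva2018echo}.

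The step I expect to require the most care is not any single estimate but the bookkeeping that lets the classical proof go through unchanged: confirming that working in a non-trace norm costs nothing (equivalence of norms, and that uniform contractivity with a common constant is the only structural input actually used), and that the weighted sequence space built over the \emph{non-vector} set $\mathcal{S}(\mathcal{H})$ is still complete. Both points are secured by the compactness of $\mathcal{S}(\mathcal{H})$ established before \eqref{inclusions spaces w}, so the ``straightforward modifications'' reduce to substituting $\mathcal{S}(\mathcal{H})$ for the Euclidean state space throughout Theorem 3.1 of \cite{grigoryeva2018echo}.
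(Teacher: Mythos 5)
Your proposal is correct and takes essentially the same route as the paper: the paper's entire proof consists of invoking Theorem 3.1 of \cite{grigoryeva2018echo} ``with straightforward modifications'' enabled by the compactness of $\mathcal{S}(\mathcal{H})$, and your write-up spells out exactly those modifications --- uniqueness by iterating the contraction against the finite diameter, existence via Banach's Fixed-Point Theorem on $(\mathcal{S}(\mathcal{H}))^{\mathbb{Z}_-}$ with the geometric weights $w_k=\mu^k$, $\mu\in(r,1)$ (your estimate $\vert\vert F(\boldsymbol{\rho})-F(\boldsymbol{\rho}')\vert\vert_w\le (r/\mu)\vert\vert\boldsymbol{\rho}-\boldsymbol{\rho}'\vert\vert_w$ checks out), completeness from the coincidence of the product and weighted-norm topologies, and the FMP from the contraction-plus-modulus-of-continuity recursion on the compact set $\mathcal{S}(\mathcal{H})\times D_n$. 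All the structural points you flag (non-vector state space, norm equivalence in finite dimensions) are precisely the ones the paper's citation tacitly relies on, so nothing is missing.
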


\section{Results}
\label{sec:results}

Finding the expression for the QRC filter of a system that has the ESP is not straightforward in the language of density matrices since the dependence on the initial condition has to be addressed. More explicitly, given a general CPTP map expressed using its Kraus decomposition, the relation \eqref{Kraus with inputs} can be iterated $n$-time steps into the past in order to obtain a quantum state $\rho _t ^n(\rho^0_{t-n})\in \mathcal{S}({\mathcal H})$ at time $t$ out of an initial condition $\rho^0_{t-n}\in \mathcal{S}({\mathcal H})$ specified at time $t-n $ via the formula:
\begin{equation}
\footnotesize
\label{eq:kraus_t}
\begin{split}
&\!\!\!\!\rho _t ^n(\rho^0_{t-n})=\\
&\sum_{i_0,\ldots,i_{n-1}\in X}\left(\overleftarrow{\prod}^{n-1}_{l=0}K_{i_l}({\bf z}_{t-l})\right)\rho^0_{t-n}\left(\overrightarrow{\prod}^{n-1}_{l=0}K^{\dagger}_{i_l}({\bf z}_{t-l})\right),  
\end{split}
\end{equation}
where $\overleftarrow{\prod}^{n-1}_{l=0}K_{i_l}({\bf z}_{t-l})=K_{i_0}({\bf z} _t)\cdots K_{i_{n-1}}({\bf z}_{t-n-1})$ and $\overrightarrow{\prod}^{n-1}_{l=0}K^{\dagger}_{i_l}({\bf z}_{t-l})=K^{\dagger}_{i_{n-1}}({\bf z}_{t-n-1})\cdots K^{\dagger}_{i_0}({\bf z} _t)$. If the QRC system has the ESP and hence a filter $U: (D_n)^{\mathbb{Z}} \longrightarrow \left(\mathcal{S}({\mathcal H})\right)^{\mathbb{Z}} $ can be associated to it, it necessarily has to satisfy
\begin{equation}
\label{eq:filter_rho}
    U({\bf z})_t=\lim_{n\rightarrow \infty}\rho _t ^n(\rho^0_{t-n}),
\end{equation}
and this value has to be independent of the initial conditions $\rho^0_{t-n} $. This fact has been explicitly shown in \cite{chen2019learning} (and in Chapter 2 of \cite{chen2022nonlinear} in more detail) in the case of strictly contractive CPTP maps with respect to the {\it operator norm} $\vertiii{\cdot }_p$ associated to Schatten norms. We recall that given $T: \mathcal{B}({\mathcal H}) \rightarrow \mathcal{B}({\mathcal H}) $, we define
\begin{equation}
\vertiii{T}_p:=\sup_{||A||_p=1}\{||T(A)||_p\mid A\in \mathcal{B}({\mathcal H})\}, 
\end{equation}
for some $p\in [1, \infty)$ and $\left\|\cdot \right\|_p $ the $p$-Schatten norm.
Using this notation, the contractivity condition on the CPTP map $T$ is stated by requiring that $\vertiii{T|_{\mathcal{B}_0(\mathcal{H})}}_p<1$, for some $p\in [1, \infty)$.  It is obvious that this condition ensures that the hypotheses of Proposition \ref{prop:contractivity implies ESP and FMP} are satisfied, which in turn implies that $T$ has the ESP and the FMP (notice that given $\rho _1, \rho _2 \in \mathcal{S}({\mathcal H})$ and ${\bf z} \in D _n $ arbitrary, the difference $T(\rho _1, {\bf z})-T(\rho _2, {\bf z}) \in \mathcal{B}_0(\mathcal{H})$ and hence the hypothesis  $\vertiii{T|_{\mathcal{B}_0(\mathcal{H})}}_p<1$ implies that $\left\|T(\rho _1, {\bf z})-T(\rho _2, {\bf z})\right\| _p< \left\|\rho_1- \rho _2\right\|_p $). Since we are in a finite-dimensional context, it suffices to impose contractivity for just one norm $\vertiii{\cdot }_p$, $p\in [1, \infty)$, in order to ensure that the limit in \eqref{eq:filter_rho} exists and that its value is independent of the initial condition.

The expression~\eqref{eq:kraus_t} shows that it is not possible to write down a closed-form expression for the filter of a QRC system with the ESP when using the density matrix representation, in which the initial dependence condition has been eliminated. This feature is ultimately due to the linearity of the setup. Already in the classical framework (see \cite{grigoryeva2018universal}), it has been shown that this can be avoided by working with affine instead of purely linear systems. In the quantum context, it can also be observed (see \cite{chen2020temporal,molteni2022optimization}) that the non-homogeneous state-space system given by $\rho_t=(1-\epsilon)T(\rho_{t-1},{\bf z}_t)+\epsilon\sigma$, where $T(\rho_{t-1},{\bf z}_t)$ is a CPTP map, $\sigma$ an arbitrary density matrix, and $0<\epsilon<1$, defines a unique filter $U({\bf z})_t=\epsilon \sigma +\epsilon\sum^{\infty}_{j=1}(1-\epsilon)^j\overleftarrow{\prod}^{j-1}_{k=0}T(\sigma,{\bf z}_{t-k})$ in which the dependence on initial conditions has disappeared. 

In the following subsections, we shall circumvent this problem by showing that certain matrix representations of finite-dimensional QRC systems on density matrices have a built-in non-homogeneous affine structure that makes them into {\it non-homogeneous state-affine systems} (SAS) of the type introduced in \cite{grigoryeva2018universal}. More specifically, we shall be working with the {\it Bloch vector representation} of quantum finite dimensional systems \cite{kimura2003bloch,byrd2003characterization} associated to a given {\it Gell-Mann basis}. This idea is not new in QRC and it can already be seen in the seminal work \cite{fujii2017harnessing} or, more recently, in the Methods section of \cite{kubota2022quantum}. In the paragraphs that follow, we shall explore in depth this representation, mostly in connection with the available literature on SAS systems (Section \ref{Non-homogenous state affine system representation}), which will allow us later on in Section \ref{Some constrains on CPTP maps for QRC} to identify various design constraints on quantum channels.

\subsection{Matrix representation of quantum channels}
  We will start by introducing the notation necessary for the matrix representation of quantum channels. In the next section, we shall focus on a specific choice of basis adapted to density matrices. Let $\{B_i\}_{i \in \left\{1, \ldots, d ^2\right\}}$ be an orthonormal basis for the vector space $\mathcal{B}(\mathcal{H})$, when endowed with the Hilbert-Schmidt inner product, that is, $\tr(B^{\dagger}_iB_j)=\delta_{ij}$. Using any such basis we can represent any operator  $A \in \mathcal{B}(\mathcal{H})$ as $A=\sum_{i=1}^{d^2} a_iB_i$, with $a_i=\tr(B_i^{\dagger}A)$. Analogously, we can express any linear map $T: \mathcal{B}(\mathcal{H})\rightarrow \mathcal{B}(\mathcal{H})$ as
\begin{equation}
\label{eq:Wmatrix}
    T(A)
=\sum^{d^2}_{i,j=1}\widehat{T}_{ij}a_jB_i, \ \mbox{where $\widehat{T}_{ij}=\tr(B^{\dagger}_iT(B_j))$.}
\end{equation}
This observation implies that QRC systems $T:\mathcal{S}(\mathcal{H})\times D_n\rightarrow \mathcal{S}(\mathcal{H})$ admit an equivalent representation as a system 
$\widehat{T}: V \times D_n\rightarrow V$, where $V\subset \bb{R}^{d^2}$ is the subset of real Euclidean space that contains the coordinate representations of the elements in $\mathcal{S}(\mathcal{H}) $ using the basis $\{B_i\}_{i \in \left\{1, \ldots, d ^2\right\}}$. 

This statement can be formalized using the language of {\it system morphisms} (see \cite{RC15, RC16} for the standard definitions and elementary facts). Consider the state-space systems determined by the triples $({\cal X} _i, F _i, h _i)$, $i \in \left\{1,2\right\}$, with $F _i: {\cal X} _i\times {\cal Z}\longrightarrow {\cal X}_i$ and $h _i:{\cal X}_i \longrightarrow {\cal Y} $.
A map $f: {\cal X} _1 \longrightarrow {\cal X}_2$ is a {\it  morphism} between the systems $({\cal X} _1,F_1, h_1)$ and $({\cal X} _2, F_2, h_2)$ whenever it satisfies the following two properties:
\begin{enumerate}[(i)]
\item {\it System equivariance:} $f(F_1({\bf x}_1, {\bf z})) = F_2(f({\bf x}_1), {\bf z})$, for all ${\bf x}_1 \in {\cal X}_1$ and ${\bf z} \in {\cal Z}$.
\item {\it Readout invariance:} $h_1({\bf x}_1) = h_2(f({\bf x}_1))$, for all ${\bf x}_1 \in {\cal X} _1$.
\end{enumerate}
When the map $f$ has an inverse $f ^{-1} $  and this inverse is also a morphism between the systems determined by  $({\cal X} _2, F_2, h_2)$ and $({\cal X} _1, F_1, h_1)$, we say that $f$ is a {\it system isomorphism} and the systems $({\cal X} _1, F_1, h_1)$ and $({\cal X} _2, F_2, h_2)$ are {\it isomorphic}. We note that given a system $F_1: {\cal X}_1 \times {\cal Z} \longrightarrow {\cal X}_1, h_1: {\cal X}_1 \longrightarrow {\cal Y}$ and a bijection $f:{\cal X}_1 \longrightarrow {\cal X}_2$, the map $f$ is a system isomorphism with respect to the system $F_2: {\cal X}_2 \times {\cal Z} \longrightarrow {\cal X}_2, h_2: {\cal X}_2 \longrightarrow {\cal Y}$ defined by 
\begin{align}
F_2({\bf x}_2, {\bf z}) &:= f(F_1(f^{-1}({\bf x}_2), {\bf z})), \label{isomorphic state map}\\
h_2({\bf x}_2) &:= h_1(f^{-1}({\bf x}_2)).\label{isomorphic readout map}
\end{align}
for all $ {\bf x}_2 \in {\cal X}_2$, ${\bf z} \in {\cal Z}$.

Consider now the QRC system given by the quantum channel $T:\mathcal{S}(\mathcal{H})\times D _n\rightarrow \mathcal{S}(\mathcal{H})$ and the readout map $h:\mathcal{B}(\mathcal{H})\rightarrow \bb{R}^m$, $m\in \bb{N}$. Using the orthonormal basis $\mathcal{B}=\{B_i\}_{i \in \left\{1, \ldots, d ^2\right\}}$ and the discussion above define the map
\begin{equation}
\label{system isom linear}
\begin{array}{cccc}
G_{\mathcal{B}}: &\mathbb{C}^{d^2} &\longrightarrow & \mathcal{B}({\mathcal H})\\
	&\mathbf{a} &\longmapsto &\sum_{i=1}^{d^2} a _i B _i.
\end{array}
\end{equation}
This map is a linear homeomorphism. Define $V=G_{\mathcal{B}} ^{-1}(\mathcal{S}(\mathcal{H}))\subset \bb{R}^{d^2}$ as well as the map (that we denote with the same symbol) $G_{\mathcal{B}}:V \longrightarrow \mathcal{S}(\mathcal{H}) $  that we obtain by restriction of the domain and codomain in \eqref{system isom linear}. This restricted map is also a homeomorphism when $V$ and $\mathcal{S}(\mathcal{H}) $ are endowed with their relative topologies \cite[Theorem 18.2]{munkres2000topology}. With all these ingredients, it is straightforward to verify that the QRC system $(\mathcal{S}(\mathcal{H}), T, h) $ is system isomorphic to $(V, \widehat{T}, \widehat{h}) $ with $\widehat{T}:V \times D_n \longrightarrow V $  and $\widehat{h}: V \longrightarrow \mathbb{R}^m $ given by
\begin{align}
\widehat{T}(\mathbf{a}, {\bf z})&:= G_{\mathcal{B}}^{-1} \left(T \left(G_{\mathcal{B}}(\mathbf{a}), {\bf z}\right)\right), \label{isomorphic state map qrc}\\
\widehat{h}(\mathbf{a})&:= h(G_{\mathcal{B}}(\mathbf{a})),\label{isomorphic readout map qrc}
\end{align}
and that the isomorphism is given by the map $G_{\mathcal{B}}:V \longrightarrow \mathcal{S}(\mathcal{H}) $. The procedure that we just spelled out can be reproduced for any other (orthonormal) basis $\mathcal{B} '$ of $\mathcal{B}({\mathcal H}) $, in which case we would obtain another system $(V', \widehat{T}', \widehat{h}') $ which is obviously isomorphic to both $(V, \widehat{T}, \widehat{h}) $ and $(\mathcal{S}(\mathcal{H}), T, h) $.

The system isomorphisms that we just defined and the compactness of the state spaces where they are defined allow us to establish important connections between the filters that they define. The following proposition describes those connections in detail.

\begin{proposition} 
\label{prop:isomorphism} 
Let $T:\mathcal{S}(\mathcal{H})\times D _n\rightarrow \mathcal{S}(\mathcal{H})$ be a quantum channel, $h: \mathcal{S}(\mathcal{H}) \longrightarrow \mathbb{R} ^m $ a readout, and let $\mathcal{B}=\{B_i\}_{i \in \left\{1, \ldots, d ^2\right\}}$ be an orthonormal basis for $\mathcal{B}(\mathcal{H})$. Let $\widehat{T}: V \times D_n \longrightarrow V $ be the isomorphic system defined in \eqref{system isom linear} and $\widehat{h}: V \longrightarrow \mathbb{R}^m $ the corresponding readout defined in \eqref{isomorphic readout map qrc}. Then:
\begin{enumerate}[(i)]
\item Given an input ${\bf z} \in ({\Bbb R}^n)^{\mathbb{Z}} $, a sequence $\boldsymbol{\rho} \in \left(\mathcal{S}(\mathcal{H})\right)^{\mathbb{Z}}  $ is a solution for that input for the system determined by $T$ if and only if the sequence ${\cal G}_{\mathcal{B}}^{-1} \left(\boldsymbol{\rho}\right) \in \left(V\right) ^{\mathbb{Z}} $ is a solution for the system associated to $\widehat{T} $. The symbol ${\cal G}_{\mathcal{B}}=\prod _{\Bbb Z} G _{\mathcal{B}}: (V)^{\mathbb{Z}}\longrightarrow \left({\cal S}( {\mathcal H})\right)^{\mathbb{Z}}$ stands for the product map.
\item  $T$ has the ESP if and only if $\widehat{T} $ has the ESP. In that case, the filters $U _T $ and $U_{\widehat{T}}$ (respectively,  $U _T^h$ and  $U_{\widehat{T}}^{\widehat{h}} $) determined by $T$  and $\widehat{T} $ (respectively, by $(T,h)$  and $(\widehat{T}, \widehat{h}) $) satisfy that $U _T={\cal G}_{\mathcal{B}} \circ U_{\widehat{T}} $ (respectively, $U _T^h= U_{\widehat{T}}^{\widehat{h}} $).
\item $U _T  $  has the FMP if and only if $U_{\widehat{T}} $ has the FMP.
\item Let $V'$ be a set homeomorphic to $V$. The relations  \eqref{isomorphic state map}-\eqref{isomorphic readout map} determine in that case an isomorphic system   $\widehat{T}':V ' \times D _n \longrightarrow V '$, $\widehat{h} ': V '  \rightarrow \mathbb{R}^m  $ that has the ESP and the FMP if and only if $\widehat{T} $ and $\widehat{h} $ have that property. 
\end{enumerate}
\end{proposition}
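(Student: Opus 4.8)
The plan is to derive all four statements from the single defining identity $G_{\mathcal{B}}(\widehat{T}(\mathbf{a}, {\bf z})) = T(G_{\mathcal{B}}(\mathbf{a}), {\bf z})$, which is \eqref{isomorphic state map qrc} rearranged using that $G_{\mathcal{B}}$ is a bijection, together with the readout identity $\widehat{h} = h \circ G_{\mathcal{B}}$ from \eqref{isomorphic readout map qrc} and the fact that $G_{\mathcal{B}}: V \to \mathcal{S}(\mathcal{H})$ is a homeomorphism. First I would prove (i) by applying this identity componentwise. If $\boldsymbol{\rho}$ solves the $T$-system for the input ${\bf z}$, set $\mathbf{a}_t := G_{\mathcal{B}}^{-1}(\rho_t)$; then $G_{\mathcal{B}}(\mathbf{a}_t) = \rho_t = T(\rho_{t-1}, {\bf z}_t) = T(G_{\mathcal{B}}(\mathbf{a}_{t-1}), {\bf z}_t) = G_{\mathcal{B}}(\widehat{T}(\mathbf{a}_{t-1}, {\bf z}_t))$, and injectivity of $G_{\mathcal{B}}$ gives $\mathbf{a}_t = \widehat{T}(\mathbf{a}_{t-1}, {\bf z}_t)$ for every $t$, that is, $\mathcal{G}_{\mathcal{B}}^{-1}(\boldsymbol{\rho})$ solves the $\widehat{T}$-system. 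The converse is the same computation read backwards, using surjectivity; since $G_{\mathcal{B}}$ is bijective the two solution sets are in bijection via the product map $\mathcal{G}_{\mathcal{B}}$.

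Part (ii) is then purely formal. The ESP is existence and uniqueness of a solution for each input, and (i) says $\mathcal{G}_{\mathcal{B}}$ restricts to a bijection between the solution set of $T$ and that of $\widehat{T}$ for every fixed input; hence one system has exactly one solution per input precisely when the other does. When the ESP holds, $\mathcal{G}_{\mathcal{B}}^{-1}(U_T({\bf z}))$ is a solution of the $\widehat{T}$-system by (i), so it must equal the unique solution $U_{\widehat{T}}({\bf z})$, giving $U_T = \mathcal{G}_{\mathcal{B}} \circ U_{\widehat{T}}$. The readout relation follows by evaluating $\widehat{h} = h \circ G_{\mathcal{B}}$ along the filter: $U_{\widehat{T}}^{\widehat{h}}({\bf z})_t = \widehat{h}(U_{\widehat{T}}({\bf z})_t) = h(G_{\mathcal{B}}(U_{\widehat{T}}({\bf z})_t)) = h(U_T({\bf z})_t) = U_T^h({\bf z})_t$.

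For (iii) I would pass to the associated functionals on left-infinite sequences, where the FMP is defined, working under the ESP (equivalent for both systems by (ii)). Restricting the filter identity of (ii) to the zeroth component yields $H_T = G_{\mathcal{B}} \circ H_{\widehat{T}}$, with $H_{\widehat{T}}: (D_n)^{\mathbb{Z}_-} \to V$ and $H_T: (D_n)^{\mathbb{Z}_-} \to \mathcal{S}(\mathcal{H})$. Because $G_{\mathcal{B}}$ is a homeomorphism between $V$ and $\mathcal{S}(\mathcal{H})$ in their compact relative topologies, and continuity is both preserved and reflected by composition with a homeomorphism, $H_T$ is continuous in the weighted norm exactly when $H_{\widehat{T}}$ is, which is the FMP equivalence. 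The only care needed here is that continuity is genuinely topological: the input side is identical for both systems, the output homeomorphism is fixed, and compactness of $\mathcal{S}(\mathcal{H})$ guarantees (via Corollary 2.7 in \cite{grigoryeva2018echo}) that the weighted-norm and product topologies agree, so the notion of FMP is stable under the identification.

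Finally, (iv) requires essentially no new work: the proofs of (i), (ii), and (iii) used only that $G_{\mathcal{B}}$ is a homeomorphism and that $\widehat{T}, \widehat{h}$ are built from $T, h$ through \eqref{isomorphic state map qrc}--\eqref{isomorphic readout map qrc}. The maps $\widehat{T}', \widehat{h}'$ are defined from $\widehat{T}, \widehat{h}$ by \eqref{isomorphic state map}--\eqref{isomorphic readout map} through a homeomorphism $f: V \to V'$ of exactly the same form, so the identical argument, with $f$ in place of $G_{\mathcal{B}}$, gives the ESP and FMP equivalence; equivalently, one observes that system isomorphisms compose, so $(V', \widehat{T}', \widehat{h}')$ is isomorphic to $(\mathcal{S}(\mathcal{H}), T, h)$. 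I do not expect any serious obstacle in this proposition: the content is the bookkeeping that the ESP is a set-theoretic (bijection) property while the FMP is a topological (homeomorphism) one, and the main delicate point is simply to keep the functional-versus-filter distinction straight in (iii) and to invoke compactness so that the weighted-norm topology is the right one to work with.
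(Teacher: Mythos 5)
Your proposal is correct and follows essentially the same route as the paper: parts (i), (ii), and (iv) are exactly the content of Proposition 2.2 in \cite{grigoryeva2018echo}, which the paper simply cites and you prove directly, and for (iii) both arguments rest on the same key fact that compactness of the state spaces makes the weighted-norm topologies on the sequence spaces coincide with the product topology. The only cosmetic difference is that you transfer continuity at the level of functionals through the single homeomorphism $G_{\mathcal{B}}$, whereas the paper shows that the product map ${\cal G}_{\mathcal{B}}=\prod_{\Bbb Z}G_{\mathcal{B}}$ is a homeomorphism of the weighted sequence spaces (via Theorem 19.6 in \cite{munkres2000topology}); these are equivalent given the bijection between causal time-invariant filters and functionals, which you correctly invoke.
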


\begin{proof}
Parts \textit{(i)}, {\it (ii)}, and {\it (iv)} are a straightforward consequence of Proposition 2.2 in \cite{grigoryeva2018echo}. As to part \textit{(iii)}, given that by {\it (ii)}  $U _T={\cal G}_{\mathcal{B}} \circ U_{\widehat{T}} $, it suffices to prove that the bijection ${\cal G}_{\mathcal{B}}=\prod _{\Bbb Z} G _{\mathcal{B}}: (V)^{\mathbb{Z}}\longrightarrow \left({\cal S}( {\mathcal H})\right)^{\mathbb{Z}}$ is a homeomorphism when the domain and the target are endowed with a weighted norm. In order to prove that, recall first that, using the observation right under \eqref{inclusions spaces w}, the weighted norms in $(V)^{\mathbb{Z}} $ and $  \left({\cal S}( {\mathcal H})\right)^{\mathbb{Z}} $ induce the product topology due to the compactness of $V $ and $\mathcal{S}(\mathcal{H})$. This immediately implies that  (Theorem 19.6 in \cite{munkres2000topology}) ${\cal G}_{\mathcal{B}}=\prod _{\Bbb Z} G _{\mathcal{B}} $ is continuous due to the continuity of $G _{\mathcal{B}} $. The same argument can be immediately applied to the inverse map ${\cal G}_{\mathcal{B}} ^{-1} $, which proves the statement.
\end{proof}

\subsection{Non-homogenous state affine system representation}
\label{Non-homogenous state affine system representation} 

The goal of this section is to choose a specific basis in $\mathcal{B}({\mathcal H})$ for the representation of the quantum channel $T:\mathcal{B}(\mathcal{H})\times D _n\rightarrow \mathcal{B}(\mathcal{H})$ (equivalently, $T:\mathcal{S}(\mathcal{H})\times D _n\rightarrow \mathcal{S}(\mathcal{H})$) in which the associated system isomorphic representation $\widehat{T}: \mathbb{C}^{d^2} \times D_n \longrightarrow \mathbb{C}^{d^2}$ (equivalently, $\widehat{T}: V \times D_n \longrightarrow V$) has a non-homogeneous state-affine form of the type introduced in \cite{grigoryeva2018universal}.

More specifically, we choose a {\it generalized Gell-Mann basis} \cite{kimura2003bloch,byrd2003characterization,siewert2022orthogonal}. This is an orthonormal basis made of Hermitian operators in which, by convention, its first element is the normalized identity, namely, $B_1=I/\sqrt{d}$. The remaining $(d^2-1)$ traceless Hermitian operators are the generators 
\begin{equation}
\label{zero trace gens}
\mathcal{B}_0=\left\{\sigma_i\right\}_{i \in \left\{1, \ldots, d^2-1\right\}}
\end{equation}
of the {\it fundamental representation}   of the Lie algebra $\mathfrak{su}(d)$ of SU($d$). The case  $d=2$ corresponds to the case of one qubit, and the Gell-Mann basis is made of the standard Pauli matrices. The orthonormality of the Gell-Mann basis is guaranteed by the product property of the fundamental representation of  $\mathfrak{su} (d) $, namely, $\sigma_a\sigma_b=\delta_{ab}I/(2d)+\sum^{d^2-1}_cf_c\sigma_c$, where $\sigma_a, \sigma_b $ are two elements of the Gell-Mann basis for $\mathfrak{su} (d) $,  and $f_c$ are complex coefficients. The resulting Gell-Mann basis $\mathcal{B}=\left\{B_i\right\}_{i \in \left\{1, \ldots, d^2\right\}}$  of ${\cal B}({\mathcal H}) $ is hence given by $B _1=I/\sqrt{d}$ and $B _i= \sigma_{i-1} $, $1<i\leq d ^2 $. Note that the subset $\mathcal{B} _0=\left\{B_i\right\}_{i \in \left\{2, \ldots, d^2\right\}}$ is a basis for the vector subspace ${\cal B} _0({\mathcal H}) \subset {\cal B}({\mathcal H}) $ of codimension $1$ made of traceless operators.

If our system is made of $N$ $d$-dimensional systems ({\it qudits}), we can extend this basis to $\mathcal{B} \left({\mathcal H}^{\otimes ^N}\right) $ by tensorization.  The orthonormality of the tensorized basis with respect to the Hilbert-Schmidt inner product is preserved since $\tr(A\otimes B)=\tr(A)\tr(B)$ for any two $A,B \in \mathcal{B}({\mathcal H}) $. 

We now go back to the system constituted by one qudit and spell out the matrix expression $\widehat{T} :\mathbb{C}^{d^2} \times D _n \longrightarrow \mathbb{C}^{d^2} $ of a CPTP map $T: \mathcal{B}({\mathcal H})\times D_n\longrightarrow \mathcal{B}({\mathcal H})$ in the basis $\mathcal{B}  $  that we just introduced, by using the prescription introduced in \eqref{eq:Wmatrix}. 
We first note that the choice $B_1=I/\sqrt{d}$ and the trace-preserving character of $T(\cdot , {\bf z})$ for any ${\bf z} \in D_n $, imply that  $\widehat{T}(\cdot , {\bf z})_{11}  =1$. Analogously, $\widehat{T}(\cdot , {\bf z})_{1j}=\tr(T(B_j, {\bf z}))/\sqrt{d}=0$ for $1<j\leq d^2$, since $T$ is trace-preserving. This implies that the matrix $\widehat{T}(\cdot , {\bf z})$ can be written as
\begin{equation}
\label{matrix form t hat}
    \widehat{T}(\cdot , {\bf z})=\left(\begin{matrix}
        1 & {\bf 0}_{d^2-1}  \\
        q({\bf z}) & p({\bf z})  
    \end{matrix}\right)
\end{equation}
where $p({\bf z}) $ is the square matrix of dimension $d^2-1 $ with complex entries 
\begin{multline}\label{eq:p_ij}
p({\bf z})_{ij}:=\left(\widehat{T}(\cdot  , {\bf z})|_{G_{\mathcal{B}}^{-1} \left(\mathcal{B}_0(\mathcal{H})\right)}\right)_{ij}\\
=
\left(\widehat{T}(\cdot  , {\bf z})|_{G _{\mathcal{B}} ^{-1}\left(\operatorname{span} \left\{\mathcal{B} _0\right\}\right)}\right)_{ij}
=\tr(B_iT(B_j, {\bf z})),
\end{multline}
$1<i,j\leq d^2$, and $q({\bf z})\in \mathbb{C}^{d^2-1} $ is given by $q({\bf z})_{i}=\tr(B_iT(I))/\sqrt{d}$, $1<i\leq d^2$. The symbol ${\bf 0}_{d^2-1}$ denotes a zero-row of length $d^2-1$.

Now, given that any element $\rho \in \mathcal{S}({\mathcal H}) $ has coordinates in the Gell-Mann basis of the form $\left(1/\sqrt{d}, \mathbf{x}^{\top} \right)^{\top}\in \mathbb{C}^{d^2}   $ with $\mathbf{x} \in \mathbb{C}^{d^2-1} $ called the {\it Bloch vector}, the matrix form \eqref{matrix form t hat} implies that 
\begin{equation}
\label{matrix form t hat with vector}
    \widehat{T}(\cdot , {\bf z})\left(\begin{matrix}
        1/\sqrt{d}  \\
        \mathbf{x}  
    \end{matrix}\right)=
\left(\begin{matrix}
        1/\sqrt{d}  \\
        p({\bf z})\textbf{x}+q({\bf z})  
    \end{matrix}\right).
\end{equation}
This expression implies that  $\widehat{T} :V \times D _n \longrightarrow V $ admits a system-isomorphic representation $\widehat{T}_0 :V_0 \times D _n \longrightarrow V_0 $ on the set $V _0= \left\{\mathbf{x} \in \mathbb{C}^{d^2-1}\mid \left(1/\sqrt{d}, \mathbf{x}^{\top} \right)^{\top}\in  V\right\}\subset \mathbb{C}^{d^2-1}$ given by 
\begin{equation}
\label{eq:x}
\begin{array}{cccc}
\widehat{T} _0: & V _0 \times D_n & \longrightarrow & V _0\\
    	&(\mathbf{x}, {\bf z}) & \longmapsto & p({\bf z})\textbf{x}+q({\bf z}),
\end{array}
\end{equation}
with readout $\widehat{h} _0(\mathbf{x})= \widehat{h} \left(\left(1/\sqrt{d}, \mathbf{x}^{\top} \right)^{\top}\right)$. The system isomorphism  is in this case, given by the map 
\begin{equation}
\label{isomor v vo}
\begin{array}{cccc}
i _0: &V _0  &\longrightarrow &V \\
	&\mathbf{x} \in V _0 &\longmapsto &\left(1/\sqrt{d}, \mathbf{x}^{\top} \right)^{\top}. 
\end{array}
\end{equation}
Part (iv) of Proposition \ref{prop:isomorphism} guarantees that the dynamical properties of the system $(\widehat{T}, \widehat{h}, V) $ (and hence those of the QRC system $(T,h, \mathcal{S}({\mathcal H}))$)
are  equivalent to those of $(\widehat{T}_0, \widehat{h}_0, V_0) $.

The importance of this observation is that it links $(T,h, \mathcal{S}({\mathcal H}))$ to the non-homogeneous state-affine system (SAS) introduced in \cite{grigoryeva2018universal} and for which various universality properties have been additionally proved in \cite{gonon2019reservoir, RC13}. SAS are defined as state equations that have the form spelled out in \eqref{eq:x}. Strictly speaking, the SAS systems studied in the above-cited references impose polynomial or trigonometric dependences of $q$ and $p$ on the inputs, while in our situation, the prescription introduced in \eqref{Kraus with inputs} is capable of accommodating more general forms.

It has been shown in \cite{grigoryeva2018universal} that when such a system has the ESP, the corresponding filter $ U_{\widehat{T}_0}: \left(D_n\right)^{\mathbb{Z}} \rightarrow  \left(V _0\right)^{\mathbb{Z}}$ can be written as
\begin{equation} 
\label{eq:filter_x}
 U_{\widehat{T}_0}({\bf z})_t = \sum^{\infty}_{j=0}\left(\prod^{j-1}_{k=0}p({\bf z}_{t-k})\right)q({\bf z}_{t-j}),
\end{equation}
where 
$\prod_{k=0}^{j-1}p({\bf z}_{t-k}):=p({\bf z}_{t}) \cdot p({\bf z}_{t-1}) \cdots p({\bf z}_{t-j+1})$.
A first sufficient condition for ESP and FMP has been formulated in 
\cite{grigoryeva2018universal} by imposing that $\sigma_{\text{max}}(p({\bf z}))<1$ for all ${\bf z} \in D_n$, where $\sigma_{\text{max}}$ is the maximum singular value of matrix $p({\bf z})$. Given a matrix $A$, the maximum singular value is equal to the 2-Schatten induced norm:  $\vertiii{A}_2=\sigma_\text{max}(A)$, where the singular values of $A$ are the square-roots of the eigenvalues of $AA^{\dagger}$. An improved sufficient condition could be potentially found using other matrix norms as in \cite{buehner2006tighter}. 

The next proposition uses this hint and, moreover, spells out equivalent {\it necessary and sufficient} conditions for the ESP and FMP to hold in the three different equivalent representations for QRC systems that we have introduced in this section. More explicitly, the statement addresses the ESP and the FMP for the operator representation $T: \mathcal{B}(\mathcal{H}) \times D_n\rightarrow \mathcal{B}(\mathcal{H})$, the matrix representation $\widehat{T} :\mathbb{C}^{d^2} \times D _n \longrightarrow \mathbb{C}^{d^2} $ associated to the Gell-Mann basis $\mathcal{B} $, and the SAS representation  $\widehat{T}_0 :\mathbb{C}^{d^2-1} \times D _n \longrightarrow \mathbb{C}^{d^2-1} $ introduced in \eqref{eq:x}. 

\begin{proposition}
\label{prop:sufficient}
Let $T: \mathcal{B}(\mathcal{H}) \times D_n\rightarrow \mathcal{B}(\mathcal{H})$ be a continuous QRC system. The following three statements are equivalent:
\begin{enumerate}[(i)]
\item There exists an operator norm $\vertiii{\cdot } $ and $\epsilon>0 $ such that 
\begin{equation}
\label{contraction op1}
\vertiii{T(\cdot  , {\bf z})|_{\mathcal{B}_0(\mathcal{H})}}<1- \epsilon, \  \mbox{for all ${\bf z} \in D_n$}.
\end{equation}
\item There exists a matrix norm $\vertiii{\cdot }$ in the space of complex $d^2\times  d^2 $ matrices such that 
\begin{equation}
\label{contraction op2}
\vertiii{\widehat{T}(\cdot  , {\bf z})|_{G _{\mathcal{B}} ^{-1}\left(\operatorname{span} \left\{\mathcal{B} _0\right\}\right)}}<1- \epsilon, \  \mbox{for all ${\bf z} \in D_n$},
\end{equation}
with $\mathcal{B} _0 $ the trace-zero elements in the basis $\mathcal{B}  $ defined in \eqref{zero trace gens} and $G _{\mathcal{B}} $ the isomorphism defined in \eqref{system isom linear} with respect to the Gell-Mann basis.
\item There is a matrix norm $\vertiii{\cdot }$ in the space of complex $(d^2-1)\times  (d^2-1) $ matrices such that the SAS representation $\widehat{T}_0 :\mathbb{C}^{d^2-1} \times D _n \longrightarrow \mathbb{C}^{d^2-1} $ introduced in \eqref{eq:x} satisfies that
\begin{equation}
\label{contraction op3}
\vertiii{p({\bf z})}<1- \epsilon, \  \mbox{for all ${\bf z} \in D_n$},
\end{equation}
\end{enumerate}
If any of these three equivalent statements hold and $D_n$ is compact, then:
\begin{enumerate}
\item The isomorphic systems $T: \mathcal{S}(\mathcal{H}) \times D_n\rightarrow \mathcal{S}(\mathcal{H}) $, $\widehat{T }: V \times D_n\rightarrow V $, and $\widehat{T} _0: V_0 \times D_n\rightarrow V_0  $, have the ESP and the FMP, and hence continuous filters $U _T: (D _n)^{\mathbb{Z}} \longrightarrow \left(\mathcal{S}(\mathcal{H})\right)^{\mathbb{Z}}$, $U _{\widehat{T}}: (D _n)^{\mathbb{Z}} \longrightarrow V ^{\mathbb{Z}}$, and $U _{\widehat{T}_0}: (D _n)^{\mathbb{Z}} \longrightarrow V _0 ^{\mathbb{Z}}$ can be associated to them.
\item In such case, the filter $U_{\widehat{T}_0} $ is then given by \eqref{eq:filter_x}. $U _{\widehat{T}} $ is determined by $U _{\widehat{T}} = {\cal I}_0\circ U_{\widehat{T}_0} $, with ${\cal I}_{0}=\prod _{\Bbb Z} i _0 $ and $i _0 $ as in \eqref{isomor v vo}. Finally, $U _T={\cal G}_{\mathcal{B}} \circ U_{\widehat{T}} $, with ${\cal G}_{\mathcal{B}}=\prod _{\Bbb Z} G _{\mathcal{B}} $ the map introduced in Proposition \ref{prop:isomorphism}. 
\item  The contraction conditions  \eqref{contraction op1}-\eqref{contraction op3} are necessary for the ESP and the FMP to hold.
\end{enumerate}
\end{proposition}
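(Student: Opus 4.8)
The plan is to prove the equivalence of \eqref{contraction op1}--\eqref{contraction op3} first, then read off the sufficiency claims and the filter formulas from the machinery already in place, and finally attack necessity, which is where the real content lies. For the equivalence, I would observe that the three inequalities concern \emph{one and the same} linear endomorphism written in three coordinate systems. The restriction $T(\cdot,{\bf z})|_{\mathcal{B}_0(\mathcal{H})}$ is carried by the linear homeomorphism $G_{\mathcal{B}}$ of \eqref{system isom linear} onto $\widehat{T}(\cdot,{\bf z})|_{G_{\mathcal{B}}^{-1}(\operatorname{span}\{\mathcal{B}_0\})}$, since $G_{\mathcal{B}}^{-1}(\mathcal{B}_0(\mathcal{H}))$ is exactly the coordinate subspace $\{a_1=0\}$; and by \eqref{eq:p_ij} the matrix of that restriction in the Gell-Mann basis is precisely $p({\bf z})$. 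Because $G_{\mathcal{B}}$ (and the identification of $G_{\mathcal{B}}^{-1}(\operatorname{span}\{\mathcal{B}_0\})$ with $\mathbb{C}^{d^2-1}$) is a fixed linear isomorphism, pullback sets up a bijection between induced norms on the three spaces under which the induced operator norm of the map is preserved. I would record this as a one-line lemma: for an endomorphism $\phi$ of a finite-dimensional $W$ and a linear isomorphism $\psi\colon W\to W'$, $\inf_{\vertiii{\cdot}}\vertiii{\phi}=\inf_{\vertiii{\cdot}}\vertiii{\psi\phi\psi^{-1}}$, both infima taken over induced norms. Existence of a norm realizing the margin $1-\epsilon$ is thus representation-independent, giving \eqref{contraction op1}$\Leftrightarrow$\eqref{contraction op2}$\Leftrightarrow$\eqref{contraction op3} with the same $\epsilon$.

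For sufficiency I would work in representation (i). Given $\rho_1,\rho_2\in\mathcal{S}(\mathcal{H})$ and ${\bf z}\in D_n$, trace preservation forces $\rho_1-\rho_2\in\mathcal{B}_0(\mathcal{H})$ and $T(\rho_1,{\bf z})-T(\rho_2,{\bf z})=T(\cdot,{\bf z})|_{\mathcal{B}_0(\mathcal{H})}(\rho_1-\rho_2)$, so \eqref{contraction op1} yields $\|T(\rho_1,{\bf z})-T(\rho_2,{\bf z})\|\le(1-\epsilon)\|\rho_1-\rho_2\|$ in the norm underlying $\vertiii{\cdot}$, with common constant $r=1-\epsilon<1$. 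This is exactly the hypothesis of Proposition \ref{prop:contractivity implies ESP and FMP}, which delivers the ESP and FMP for $T\colon\mathcal{S}(\mathcal{H})\times D_n\to\mathcal{S}(\mathcal{H})$. Proposition \ref{prop:isomorphism}\,(ii)--(iv), together with the isomorphism $i_0$ of \eqref{isomor v vo}, then transports both properties to $\widehat{T}$ and $\widehat{T}_0$ and supplies the continuous filters. The filter identities follow immediately: $U_{\widehat{T}_0}$ is given by \eqref{eq:filter_x} by the SAS result of \cite{grigoryeva2018universal}, $U_{\widehat{T}}=\mathcal{I}_0\circ U_{\widehat{T}_0}$ via $i_0$, and $U_T=\mathcal{G}_{\mathcal{B}}\circ U_{\widehat{T}}$ by Proposition \ref{prop:isomorphism}\,(ii).

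For necessity it suffices, by the equivalence, to work in the SAS picture \eqref{eq:x} and show that the ESP and FMP force $\hat{\rho}(\Sigma)<1$, where $\Sigma$ is the compact set $\{p({\bf z})\mid{\bf z}\in D_n\}$ and $\hat{\rho}$ is its joint spectral radius; the passage from $\hat{\rho}(\Sigma)<1$ to the existence of an induced norm with $\sup_{\bf z}\vertiii{p({\bf z})}<1-\epsilon$ is the standard characterization of the joint spectral radius as the infimum over induced matrix norms. First, confinement already gives $\hat{\rho}(\Sigma)\le 1$: each affine map $\mathbf{x}\mapsto p({\bf z})\mathbf{x}+q({\bf z})$ maps the compact convex Bloch body $V_0$ (whose interior contains the maximally mixed vector $\mathbf{0}$) into itself, so every length-$n$ product $P_n=p({\bf z}_{1})\cdots p({\bf z}_{n})$ obeys $\|P_n(\mathbf{x}-\mathbf{y})\|\le\operatorname{diam}(V_0)$ for all $\mathbf{x},\mathbf{y}\in V_0$; as $V_0-V_0$ contains a ball, the family $\{P_n\}$ is uniformly bounded and $\hat{\rho}(\Sigma)\le1$. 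Next I would use the FMP to drive products to zero. With $P_n({\bf z})=\prod_{k=0}^{n-1}p({\bf z}_{-k})$, the filter \eqref{eq:filter_x} satisfies the shift identity $H({\bf z})-H({\bf v})=P_n\bigl(H(\mathcal{T}_n{\bf z})-H(\mathcal{T}_n{\bf v})\bigr)$ whenever ${\bf z}$ and ${\bf v}$ agree on the indices $0,\dots,-(n-1)$, since then the first $n$ summands and $P_n$ coincide and only the shifted tails survive. Two such sequences lie at weighted distance at most $w_n\operatorname{diam}(D_n)\to 0$, so uniform continuity of the FMP functional on the compact domain forces $\|P_n\bigl(H(\mathcal{T}_n{\bf z})-H(\mathcal{T}_n{\bf v})\bigr)\|\to 0$ uniformly over all admissible blocks and tails. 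As the tails vary freely, $H(\mathcal{T}_n{\bf z})-H(\mathcal{T}_n{\bf v})$ sweeps the differences of the reachable set while $P_n$ sweeps all length-$n$ products, so $P_n\to 0$ uniformly on the span $W$ of the reachable differences.

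The main obstacle is precisely this last step: upgrading uniform decay of $P_n$ on the reachable directions to decay of the full operator norm, equivalently ruling out the boundary value $\hat{\rho}(\Sigma)=1$. I would close the gap with the ESP. The span $W$ of the reachable differences is $p({\bf z})$-invariant (it is the image of $V_0-V_0$ under the linear parts), and we have just shown $\hat{\rho}(\Sigma|_W)<1$; on the quotient, the ESP forbids any nonzero bounded homogeneous trajectory $\bar{\delta}_t=\bar{p}({\bf z}_t)\bar{\delta}_{t-1}$, because such a trajectory would lift---using the established contraction on $W$ to solve the $W$-component---to a nonzero bounded solution of the homogeneous system, contradicting uniqueness of solutions. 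A dimensional induction on the quotient, each step pairing $\hat{\rho}\le1$ with the absence of bounded homogeneous orbits, then yields $\hat{\rho}(\Sigma)<1$ on the whole space. The genuinely delicate points---that a compact set with $\hat{\rho}=1$ must admit a nonzero bounded trajectory (an extremal/Barabanov-norm or limiting-trajectory argument) and the handling of the reducible block structure---are where I expect the real work to concentrate; everything else is bookkeeping through the isomorphisms established above.
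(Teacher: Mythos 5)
Your handling of the equivalences (i)--(iii) and of the sufficiency claims matches the paper's proof essentially verbatim: you pull norms back through $G_{\mathcal{B}}$ so that it becomes an isometry, identify $\widehat{T}(\cdot,{\bf z})|_{G_{\mathcal{B}}^{-1}\left(\operatorname{span}\{\mathcal{B}_0\}\right)}$ with $p({\bf z})$ via \eqref{eq:p_ij}, and for sufficiency use that trace preservation puts $\rho_1-\rho_2$ in $\mathcal{B}_0(\mathcal{H})$ before invoking Propositions \ref{prop:contractivity implies ESP and FMP} and \ref{prop:isomorphism}. The divergence, and the problem, is in the necessity claim. The paper's route is short: iterating the SAS recursion \eqref{eq:x} from two arbitrary initial vectors and using the ESP (via the initial-condition independence of the limit, as in \eqref{almost for ps}) forces $\lim_{n\to\infty}\prod_{j=0}^{n-1}p({\bf z}_{t-j})=0$ for \emph{every} input sequence; continuity of $T$ and compactness of $D_n$ make $\Sigma=\{p({\bf z})\mid{\bf z}\in D_n\}$ compact; and then Corollary 6.4 of \cite{hartfiel2002nonhomogeneous} converts "all infinite products of a compact family vanish" directly into the existence of a matrix norm satisfying \eqref{contraction op3}. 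You instead undertake to reprove precisely this last implication from scratch, via the joint spectral radius, Barabanov norms, and an induction over invariant subspaces, and you explicitly defer the decisive lemma --- that a compact family with $\hat{\rho}(\Sigma)=1$ admits a nonvanishing bounded trajectory, together with the reducible-case bookkeeping --- as "where the real work concentrates." That deferred lemma \emph{is} the theorem the paper cites; without proving it or citing a Hartfiel/Berger--Wang-type result, your necessity argument is incomplete at its crux.

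There is also a concrete flaw in your ESP step on the quotient. The ESP, as defined in the paper, asserts uniqueness of solutions with values in $\mathcal{S}(\mathcal{H})$, equivalently in the Bloch body $V_0$. A nonzero bounded homogeneous trajectory $\delta_t=p({\bf z}_t)\delta_{t-1}$, added (even after scaling) to a $V_0$-valued solution $\mathbf{x}_t$, produces a second solution of the affine recursion on $\mathbb{C}^{d^2-1}$, but not necessarily a second $V_0$-valued solution: $\mathbf{x}_t$ may run along the boundary of $V_0$ and no uniform interiority is available, so $\mathbf{x}_t+\lambda\delta_t\in V_0$ for all $t$ cannot be guaranteed and no contradiction with the ESP follows without substantial extra work (one would need a uniformly interior solution or a limiting argument through reachable differences; to be fair, the paper's own use of arbitrary $\mathbf{x}_0\in\mathbb{C}^{d^2-1}$ quietly leans on the same extension of uniqueness beyond $V_0$, via the filter characterization \eqref{eq:filter_rho}). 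Your FMP shift-identity argument giving decay of the products $P_n$ on the span of reachable differences is correct but ultimately unnecessary: the paper obtains decay on the whole space from the ESP alone, which is why its necessity proof needs neither the reachable-span reduction nor the quotient induction.
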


\begin{proof}
The equivalences between the contraction conditions \eqref{contraction op1}-\eqref{contraction op3} can be shown by using norms in $\mathbb{C}^{d ^2 } $ and $\mathcal{B}({\mathcal H}) $ that make the map $G_{\mathcal{B}}$ in \eqref{system isom linear} into an isometry. More explicitly, let us start with an operator norm $\vertiii{\cdot } $ for which the maps $T(\cdot , {\bf z}): \mathcal{B}({\mathcal H})\longrightarrow \mathcal{B}({\mathcal H}) $ satisfy the condition \eqref{contraction op1}. Assume that this operator norm is associated with a given norm $\left\|\cdot \right\|_{\mathcal{B}({\mathcal H})} $ in $\mathcal{B}({\mathcal H}) $, that is, $\vertiii{T(\cdot , {\bf z})}=\sup_{A\neq 0} \left\{\left\|T(A, {\bf z})\right\|_{\mathcal{B}({\mathcal H})}/ \left\|A \right\|_{\mathcal{B}({\mathcal H})}\right\}$. Take now the norm $\left\|\cdot \right\|_{\mathbb{C}^{d^2}} $ in $\mathbb{C}^{d^2} $ with respect to which $G_{\mathcal{B}} $ is an isometry, that is, set $\left\|\mathbf{a} \right\|_{\mathbb{C}^{d^2}} = \left\|G_{\mathcal{B}}(\mathbf{a})\right\|_{\mathcal{B}({\mathcal H})}$, for all $\mathbf{a} \in \mathbb{C}^{d^2} $, and denote by $\vertiii{\widehat{T}(\cdot , {\bf z})}'=\sup_{\mathbf{a}\neq 0} \left\{\left\|\widehat{T}(\mathbf{a}, {\bf z})\right\|_{\mathbb{C}^{d^2}}/ \left\|\mathbf{a} \right\|_{\mathbb{C}^{d^2}}\right\} $. Using just these definitions, it is easy to see that $\vertiii{T(\cdot  , {\bf z})}=\vertiii{\widehat{T}(\cdot  , {\bf z})}' $ and, moreover, that
\begin{equation*}
\vertiii{T(\cdot  , {\bf z})|_{\mathcal{B}_0(\mathcal{H})}}=\vertiii{\widehat{T}(\cdot  , {\bf z})|_{G _{\mathcal{B}} ^{-1}\left(\operatorname{span} \left\{\mathcal{B} _0\right\}\right)}}',
\end{equation*}
which can be easily used to prove the equivalence between \eqref{contraction op1} and \eqref{contraction op2}. The equivalence between \eqref{contraction op2} and \eqref{contraction op3} follows from the fact that, as it can be seen in \eqref{matrix form t hat}, 
$$\widehat{T}(\cdot  , {\bf z})|_{G _{\mathcal{B}} ^{-1}\left(\operatorname{span} \left\{\mathcal{B} _0\right\}\right)}=p({\bf z}). $$

The claims $1$ and $2$ in the second part of the proposition are straightforward consequences of Propositions \ref{prop:contractivity implies ESP and FMP} and \ref{prop:isomorphism}. We now prove the last claim in point $3$ using the SAS representation \eqref{eq:x} for the continuous QRC system $T$ that, this time, it is assumed to have the ESP and the FMP. If we iterate $n$ times this state equation using an arbitrary vector $\mathbf{x} _0\in \mathbb{C}^{d ^2-1}$ as an initial condition, we obtain the vector $U^{n,\mathbf{x}_0} _{\widehat{T}_0}({\bf z})_t\in \mathbb{C}^{d ^2-1} $ defined by:
\begin{equation*}
 U^{n,\mathbf{x}_0} _{\widehat{T}_0}({\bf z})_t=\sum^{n-1}_{j=0}\left(\prod^{j-1}_{k=0}p({\bf z}_{t-k})\right)q({\bf z}_{t-j}) + \prod^{n-1}_{j=0}p({\bf z}_{t-j}) \mathbf{x}_0.
\end{equation*}
If we now assume that the system has the ESP, we necessarily have that 
\begin{equation}
\label{almost for ps}
U _{\widehat{T}_0}({\bf z})_t=\lim_{n \rightarrow \infty} U^{n,\mathbf{x}_0} _{\widehat{T}_0}({\bf z})_t=\lim_{n \rightarrow \infty} U^{n,\mathbf{x}_0'} _{\widehat{T}_0}({\bf z})_t,
\end{equation}
with $U _{\widehat{T}_0} $ the filter in \eqref{eq:filter_x} and $\mathbf{x} _0,\mathbf{x} _0' \in \mathbb{C}^{d ^2-1}$ arbitrary vectors. The last equality in \eqref{almost for ps} and the arbitrary character of $\mathbf{x} _0,\mathbf{x} _0' \in \mathbb{C}^{d ^2-1}$ imply that 
\begin{equation*}
\prod^{\infty}_{j=0}p({\bf z}_{t-j})= \lim_{n \rightarrow \infty}\prod^{n-1}_{j=0}p({\bf z}_{t-j})=0,
\end{equation*}
necessarily. We now notice that since $T$ is a continuous map, then so is the dependence of $p({\bf z})$ on the inputs ${\bf z} $. Moreover, since, in this case, $D _n $ is assumed to be a compact set, then so is the matrix set 
$\left\{p( {\bf z})\mid {\bf z} \in D _n\right\}$. Now, Corollary 6.4 in \cite{hartfiel2002nonhomogeneous} guarantees the existence of matrix norm $\vertiii{\cdot }$ for which \eqref{contraction op3} is satisfied, as required.
\end{proof}

\begin{remark}
\normalfont
Using the characterization of the mixing property in \eqref{condition to be mixing}, it is clear that a necessary condition for the QRC $T$ to satisfy the contractivity hypothesis in the previous proposition and, in passing, satisfy the ESP and the FMP, is that all the maps $T(\cdot , {\bf z}) $  are mixing for all ${\bf z} \in D _n $.  Indeed, since the spectral radius is a lower bound for any matrix norm (see \cite[Theorem 5.6.9]{horn2012matrix}) we have that $\lambda_{\text{max}}\left(\widehat{T}(\cdot  , {\bf z})|_{G _{\mathcal{B}} ^{-1}\left(\operatorname{span} \left\{\mathcal{B} _0\right\}\right)}\right)\leq \vertiii{\widehat{T}(\cdot  , {\bf z})|_{G _{\mathcal{B}} ^{-1}\left(\operatorname{span} \left\{\mathcal{B} _0\right\}\right)}}$.  Consequently, if $\vertiii{\widehat{T}(\cdot  , {\bf z})|_{G _{\mathcal{B}} ^{-1}\left(\operatorname{span} \left\{\mathcal{B} _0\right\}\right)}}< 1- \epsilon  $ then  $\lambda_{\text{max}}\left(\widehat{T}(\cdot  , {\bf z})|_{G _{\mathcal{B}} ^{-1}\left(\operatorname{span} \left\{\mathcal{B} _0\right\}\right)}\right)\leq 1- \epsilon  $, which is equivalent to 
$\lambda_{\text{max}}\left(T(\cdot , {\bf z})|_{\mathcal{B}_0(\mathcal{H})}\right)\leq 1- \epsilon  $
and then all maps $T(\cdot , {\bf z}) $ are mixing by  \eqref{condition to be mixing}. We emphasize that this mixing condition is necessary but not sufficient since, as the composition of two mixing maps is not necessarily mixing, the existence of the limit in~\eqref{eq:filter_rho} is not guaranteed even if each of the factor operators is mixing.
\end{remark}

\subsection{Some constrains on CPTP maps for QRC}
\label{Some constrains on CPTP maps for QRC}

The SAS representation allows us to easily characterize situations like the one in Proposition \ref{prop:sufficient} in which a QRC system has the ESP and the FMP and, moreover, it allows us to write explicitly down the corresponding filter \eqref{eq:filter_x}. As we shall now see in this section, more interesting facts can be derived from this representation having to do with design features that should be avoided, as they produce systems with only trivial solutions. The first one concerns {\it unital quantum channels}, that is, channels that satisfy 
$$T(I,{\bf z})=I \quad \mbox{for all ${\bf z} \in D_n$}.$$ 
That situation is studied in the next theorem, which will be generalized in Theorem \ref{th:constant} to the case of QRC systems that exhibit an input-independent fixed point.

\begin{theorem} 
\label{th:unital}
Let $T: \mathcal{B}(\mathcal{H}) \times D_n\rightarrow \mathcal{B}(\mathcal{H})$ be a QRC system for which there exists an operator norm $\vertiii{\cdot } $ and $\epsilon>0 $ such that 
$
\vertiii{T(\cdot  , {\bf z})|_{\mathcal{B}_0(\mathcal{H})}}<1- \epsilon$,  for all ${\bf z} \in D_n$. Then, the corresponding filter $U _T  $ is constant with $U_T({\bf z})_t=I/d$ for all ${\bf z}\in (D_n)^{\mathbb{Z}}$ (equivalently $U_{\widehat{T} _0}({\bf z})_t={\bf 0}$) if and only if $T$ is unital.
\end{theorem}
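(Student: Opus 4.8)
The plan is to transport the whole question to the SAS representation $\widehat{T}_0$ of \eqref{eq:x}, where the statement becomes transparent, and then translate the conclusion back using part \textit{(iv)} of Proposition \ref{prop:isomorphism}. First I note that the hypothesis is precisely condition \textit{(i)} of Proposition \ref{prop:sufficient}, so the system has the ESP and FMP and its filter $U_{\widehat{T}_0}$ is given by the closed form \eqref{eq:filter_x}. I would then record the coordinate bookkeeping in the Gell-Mann basis: the maximally mixed state $I/d$ has coordinates $(1/\sqrt{d},{\bf 0}^{\top})^{\top}$, since $I/d=(1/\sqrt{d})B_1$, so that $U_T({\bf z})_t=I/d$ for all ${\bf z}$ is equivalent, through the isomorphism chain $U_T=\mathcal{G}_{\mathcal{B}}\circ \mathcal{I}_0\circ U_{\widehat{T}_0}$, to $U_{\widehat{T}_0}({\bf z})_t={\bf 0}$ for all ${\bf z}$, exactly as claimed in the statement. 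The entire proof then reduces to the assertion that the inhomogeneous term $q({\bf z})$ in \eqref{eq:x} vanishes for all ${\bf z}\in D_n$ if and only if $T$ is unital, because $q$ is what drives the state-affine filter away from the origin.

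For the direction \emph{unital $\Rightarrow$ constant filter}, I would use the definition $q({\bf z})_i=\tr(B_iT(I,{\bf z}))/\sqrt{d}$ from \eqref{eq:p_ij} together with the fact that $B_i=\sigma_{i-1}$ is traceless for $i>1$. If $T(I,{\bf z})=I$, then $\tr(B_iT(I,{\bf z}))=\tr(B_i)=0$ for every $i>1$, so $q({\bf z})={\bf 0}$ for all ${\bf z}\in D_n$. Substituting this into the series \eqref{eq:filter_x} annihilates every term and yields $U_{\widehat{T}_0}({\bf z})_t={\bf 0}$, that is, $U_T({\bf z})_t=I/d$.

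The converse carries the actual content, and I expect it is cleanest to bypass the series altogether and instead exploit that the filter is itself a solution of the state equation \eqref{eq:x}. Assuming $U_{\widehat{T}_0}({\bf z})_s={\bf 0}$ for all $s$ and all inputs, the recursion $U_{\widehat{T}_0}({\bf z})_t=p({\bf z}_t)U_{\widehat{T}_0}({\bf z})_{t-1}+q({\bf z}_t)$ forces ${\bf 0}=p({\bf z}_t){\bf 0}+q({\bf z}_t)$, hence $q({\bf z}_t)={\bf 0}$. Since for any fixed ${\bf c}\in D_n$ one can build an input sequence in $(D_n)^{\mathbb{Z}}$ with ${\bf z}_t={\bf c}$ at some time $t$ (possible because $D_n$ is nonempty), this gives $q({\bf c})={\bf 0}$ for every ${\bf c}\in D_n$. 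Finally I would reconstruct unitality from $q\equiv{\bf 0}$: the vanishing of $q({\bf z})$ says that all traceless coordinates of $T(I,{\bf z})$ are zero, while trace preservation pins down the remaining one via $\tr(B_1T(I,{\bf z}))=\tr(T(I,{\bf z}))/\sqrt{d}=d/\sqrt{d}=\sqrt{d}$. Therefore $T(I,{\bf z})=\sqrt{d}\,B_1=I$ for all ${\bf z}\in D_n$, which is unitality.

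I do not expect a hard estimate anywhere; the main obstacle is conceptual and organizational, namely recognizing that unitality is equivalent to the vanishing of the affine term $q$ of the state-affine system, and getting the Gell-Mann coordinate correspondence $I/d\leftrightarrow{\bf 0}$ right. Once that identification is made, both implications follow either from the closed-form filter \eqref{eq:filter_x} (forward) or, more economically, from the fixed-point relation defining the filter (converse), with only a small amount of care needed to justify realizing an arbitrary ${\bf c}\in D_n$ at a prescribed time.
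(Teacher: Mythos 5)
Your proposal is correct and follows essentially the same route as the paper: the forward direction is the paper's argument verbatim (unitality forces $q({\bf z})={\bf 0}$ via tracelessness of the Gell-Mann generators, so the SAS filter vanishes and $U_T({\bf z})_t=I/d$ by the isomorphism chain of Proposition \ref{prop:sufficient}), and your converse is the paper's fixed-point argument merely transported to SAS coordinates, where the recursion yields $q\equiv{\bf 0}$ and trace preservation reconstructs $T(I,{\bf z})=I$, whereas the paper applies the recursion $U_T({\bf z})_t=T(U_T({\bf z})_{t-1},{\bf z}_t)$ directly in the density-matrix representation to get $T(I,{\bf z})=I$ in one step. The remaining differences are cosmetic (annihilating the series \eqref{eq:filter_x} term by term versus noting that the homogeneous contraction admits only the trivial solution), and your care in realizing an arbitrary ${\bf c}\in D_n$ at a prescribed time is a point the paper leaves implicit.
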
 

\begin{proof}  
We prove this statement by using the SAS representation associated to the Gell-Mann basis $\mathcal{B} $. If $T$ is unital, that is, $T(I,{\bf z})=I$, then the expressions \eqref{eq:Wmatrix} and \eqref{matrix form t hat} imply that  $q({\bf z})_i=\tr(B_iT(I,{\bf z}))/\sqrt{d}=\tr(B_i)/\sqrt{d}=0$, for all $i \in \left\{2, \ldots, d ^2\right\} $. Therefore, the SAS state equation becomes, in this case, $\textbf{x}_t=p({\bf z}_t)\textbf{x}_{t-1}$, which is a homogeneous equation whose only solution under the contractivity hypotheses is the trivial one, that is, $\textbf{x}_t= {\bf 0}$ for all $t \in \Bbb Z$. Consequently, $U _{\widehat{T} _0}= {\bf 0} $. Proposition \ref{prop:sufficient}  implies then that $U _{T} ={\cal G}_{\mathcal{B}} \circ {\cal I}_0\circ U_{\widehat{T}_0}= I/d$.

Conversely, suppose that $U_T({\bf z})_t=I/d$ for all ${\bf z}\in (D_n)^{\mathbb{Z}}$. Since the filter $U _T $ is determined by the recursions
$U_T({\bf z})_t=T \left(U_T({\bf z})_{t-1}, {\bf z}_t\right)$,
then, this relation implies that $T(I,{\bf z})=I$, for all ${\bf z}\in D_n$.
\end{proof} 

\begin{remark}
\normalfont
There is a close relation between unital quantum channels and contractivity. Indeed, it has been shown that unital maps are contractive for all Schatten $p$-norms, that is, $\vertiii{T(\cdot,{\bf z})}_p\leq 1$ (see Theorem 2.4 in \cite{perez2006contractivity}) when $T(\cdot,{\bf z}) $ is unital. 
\end{remark}  

\begin{example}
\label{ex:dep} 
\normalfont
Consider the depolarizing channel $\mathcal{E}: \mathcal{S}({\mathcal H}) \longrightarrow \mathcal{S}({\mathcal H}) $ defined by 
\begin{equation} \label{eq:dep_S}
    \mathcal{E}(\rho)=(1-\lambda)\rho+\lambda\frac{I}{d},
\end{equation}
where $0\leq\lambda\leq 1$ denotes the probability of finding the system at the maximally mixed state $I/d$. If we arrange the previous equation as an input-dependent channel, we can write the following state equation:
\begin{equation}
\label{eq:dep_Ssta}
    \rho_t=\mathcal{E}(\rho_{t-1},{\bf z}_t)=(1-\lambda_t)\rho_{t-1}+\lambda_t\frac{I}{d},
\end{equation}
where $\lambda_t:=\lambda({\bf z}_t)$ is a function of the input at each time step. This map is strictly contractive whenever $r=\sup_{t\in \bb{Z}}(1-\lambda_t)<1$ in which case $r$ is the contraction rate. Let us compute the state after $n$ backwards iterations: 
\begin{equation}\label{eq:rho_n_dep}
\begin{split}
    \rho^n_t&=\left(\prod^{n-1}_{i=0}(1-\lambda_{t-i})\right)\rho_{t-n}\\
    &+\sum^{n-1}_{i=0}\left(\lambda_{t-i}\left(\prod^{i-1}_{j=0}(1-\lambda_{t-j})\right)\right)\frac{I}{d}.
\end{split}
\end{equation}
Taking the limit $n\rightarrow \infty$, it is easy to see that the first summand vanishes because $1-\lambda_{t-i}\leq r<1$ for all $t\in \bb{Z}$, and hence all possible dependence on the initial condition disappears. Regarding the second summand, since the map is strictly contractive, the limit $\lim_{n\rightarrow \infty} \rho^n_t$ exists and must yield a density matrix, which means that $\sum^{\infty}_{i=0}\left(\lambda_{t-i}\left(\prod^{i-1}_{j=0}(1-\lambda_{t-j})\right)\right)=1$ {because of the normalization}. This limit hence defines the filter $U_{\mathcal{E}}({\bf z})_t=I/d$, which is consistent with the conclusion of Theorem \ref{th:unital} since \eqref{eq:dep_Ssta} is a strictly contractive unital map.

We could have found the same result by directly using~\eqref{eq:x}, for which we need to define the extension of the depolarizing channel to the whole space of bounded operators $\mathcal{B}(\mathcal{H})$. We define then the CPTP map $\mathcal{E}': \mathcal{B}({\mathcal H}) \longrightarrow \mathcal{B}({\mathcal H}) $
\begin{equation} \label{eq:dep_B}
    \mathcal{E}'(A)=(1-\lambda)A+\lambda\tr(A)\frac{I}{d}.
\end{equation}
The SAS representation associated with the input-dependent version of this map is such that  $q({\bf z}_t)=0$ because the map is unital. It is also easy to see that the system equation becomes $\textbf{x}_t=p({\bf z}_t)\textbf{x}_{t-1}=(1-\lambda_t)\textbf{x}_{t-1}$. The filter $U_{\widehat{\mathcal{E}}' _0}$ of this QRC equation satisfies that $U_{\widehat{\mathcal{E}}' _0}({\bf z})_t={\bf 0}$, for all $t \in \Bbb Z$. 
\end{example}

\begin{example}
\label{ex:ing}
\normalfont
The next case is an example of ``poorly engineered"  QRC system which can be detected using Theorem \ref{th:unital}. Indeed, we shall introduce a model of dissipation using tuneable local losses, but Theorem \ref{th:unital} will discard its long-term applicability because it is unital.

Let us define the Markovian master equation that governs the dynamics between input injections:
\begin{equation}
\label{eq:me_1qubit0}
    \dot{\rho}=-i[H({\bf z}_t),\rho]+\gamma L\rho L^{\dagger}-\frac{\gamma}{2}\{L^{\dagger}L,\rho\}, 
\end{equation}
where $L$ is the jump operator and $\{L^{\dagger}L,\rho\}$ denotes the anticommutator. We define the input-dependent Hamiltonian as $H({\bf z}_t)=h({\bf z}_t)\sigma^x/2$, where $h({\bf z}_t)$ will be an arbitrary function of the input, and the jump operator as $L=\sigma^z$. This is a single qubit under the influence of an external magnetic field in the $x$ direction of the real space (whose intensity varies between inputs) with a local dephasing. Notice that the Hamiltonian is considered as time-independent when integrating the dynamics since it is constant between input injections. 

Going from the density matrix language to the real variable linear description requires to find the CPTP map representation of~\eqref{eq:me_1qubit0}. Since Markovian master equations are CPTP linear transformations on their own, we just need to find the Kraus decomposition that represents the dynamics of this master equation as a map. We will follow the procedure as explained in \cite{andersson2007finding} (see Section 2 in the reference for the details of the algorithm). In particular, the Kraus decomposition of a single qubit can be written in the following form:
\begin{equation}\label{eq:kraus_1qubit}
    T(\rho)=\sum_{i,j}S^{(U)}_{ij}B_i\rho B^{\dagger}_j,
\end{equation}  
where $B_i$ are the basis elements of a single qubit in the operator space ($\{I,\sigma^x,\sigma^y,\sigma^z\}$) and $S^{(U)}_{ij}:=U^{\dagger}SU$ is a unitary transformation of the Choi matrix $S$. Applying~\eqref{eq:kraus_1qubit} to the definition of matrix $T$ in~\eqref{eq:Wmatrix}, we can find the map of the single qubit observables:
\begin{equation}
  \left( \begin{matrix} 1   \\ 
  \braket{\sigma^x}_t \\
  \braket{\sigma^y}_t \\
  \braket{\sigma^z}_t
  \end{matrix} \right) =\left( \begin{matrix} 1 & 0 & 0 & 0  \\ 
 
   0 & \widehat{T}_{22} & 0 & 0 \\
   0 & 0 &  \widehat{T}_{33} & \widehat{T}_{34} \\
   0 &  0 &  \widehat{T}_{43} & \widehat{T}_{44}
  \end{matrix} \right) \left( \begin{matrix} 1   \\ 
  \braket{\sigma^x}_{t-1} \\
  \braket{\sigma^y}_{t-1} \\
  \braket{\sigma^z}_{t-1}
  \end{matrix} \right),
\end{equation}
where $\braket{\sigma^a}:=\tr(\sigma^a\rho)$ is the expected value of the spin projection in the $a$ direction of the real space. The expressions for each matrix element are shown below:
\begin{widetext}
\begin{equation}
    \begin{matrix}
        \widehat{T}_{22}=e^{-2\gamma \Delta \tau }, \\
        \widehat{T}_{33} = e^{-\gamma \Delta \tau }\left(\cosh\left(\Delta \tau \sqrt{\gamma^2-h_t^2}\right)-\frac{\gamma}{\sqrt{\gamma^2-h_t^2}}\sinh\left(\Delta \tau \sqrt{\gamma^2-h_t^2}\right)\right),  \\
       \widehat{T}_{44} =  e^{-\gamma \Delta \tau }\left(\cosh\left(\Delta \tau \sqrt{\gamma^2-h_t^2}\right)+\frac{\gamma}{\sqrt{\gamma^2-h_t^2}}\sinh\left(\Delta \tau \sqrt{\gamma^2-h_t^2}\right)\right), \\
        \widehat{T}_{34} = \frac{h_te^{-\gamma \Delta \tau }}{\sqrt{\gamma^2-h_t^2}}\sinh\left(\Delta \tau \sqrt{\gamma^2-h_t^2}\right),\\
        \widehat{T}_{43} = - \widehat{T}_{34}, 
    \end{matrix}
\end{equation}
\end{widetext}
where have shortened notation by setting $h_t=h({\bf z}_t)$.
The eigenvalues of matrix $\widehat{T}$ can be computed analytically: $\lambda_1=1$, $\lambda_2=e^{-2\gamma \Delta \tau }$, $\lambda_3=e^{-(\gamma+\sqrt{\gamma^2-h^2_t})\Delta \tau }$ and $\lambda_4=e^{-(\gamma-\sqrt{\gamma^2-h^2_t})\Delta \tau }$. The moduli of the eigenvalues are $|\lambda_1|=1$, $|\lambda_2|=e^{-2\gamma \Delta \tau }<1$ and $|\lambda_3|=e^{-(\gamma+\sqrt{\gamma^2-h^2_t})\Delta \tau }<1$. Eigenvalue $|\lambda_4|=e^{-(\gamma-\sqrt{\gamma^2-h^2_t})\Delta \tau }$ is smaller than one if and only if $h_t\neq 0$. Under that condition, the map is a mixing channel with a single fixed point. It can be checked that the map is unital so the fixed point is the maximally mixed state:
\begin{equation}
    \rho^*=\left( \begin{matrix} 1/2 & 0  \\ 0 & 1/2  \end{matrix} \right).
\end{equation}

Let us prove that there exists some norm where $\vertiii{p(\textbf{z})}<1$ for all inputs. A straightforward induced norm to evaluate is the 2-Schatten induced norm. The singular values of the restriction 
\begin{equation}
 p(\textbf{z})=\left( \begin{matrix} 
     \widehat{T}_{22} & 0 & 0 \\
    0 &  \widehat{T}_{33} & \widehat{T}_{34} \\
     0 &  \widehat{T}_{43} & \widehat{T}_{44}
  \end{matrix} \right)
\end{equation}
are $\sigma_1=e^{-2\gamma \Delta \tau }<1$, $\sigma_2=e^{-\gamma \Delta \tau }\sqrt{f_+}$ and $\sigma_3=e^{-\gamma \Delta \tau }\sqrt{f_-}$, where
\begin{widetext}
\begin{equation}
\begin{split}
    f_\pm &=\frac{1}{\gamma^2-h^2_t}\left(-h^2_t+\gamma^2\cosh\left(2\Delta \tau \sqrt{\gamma^2-h_t^2}\right)\right. \\
     &\left.\pm\gamma\sqrt{\sinh^2\left(\Delta \tau \sqrt{\gamma^2-h_t^2}\right)}\sqrt{-4h^2_t+2\gamma^2+2\gamma^2\cosh\left(2\Delta \tau \sqrt{\gamma^2-h_t^2}\right)}\right).      
\end{split}
\end{equation} 
\end{widetext}

\begin{figure}[h]
\captionsetup[subfigure]{}
\begin{center}
\includegraphics[scale=1.05]{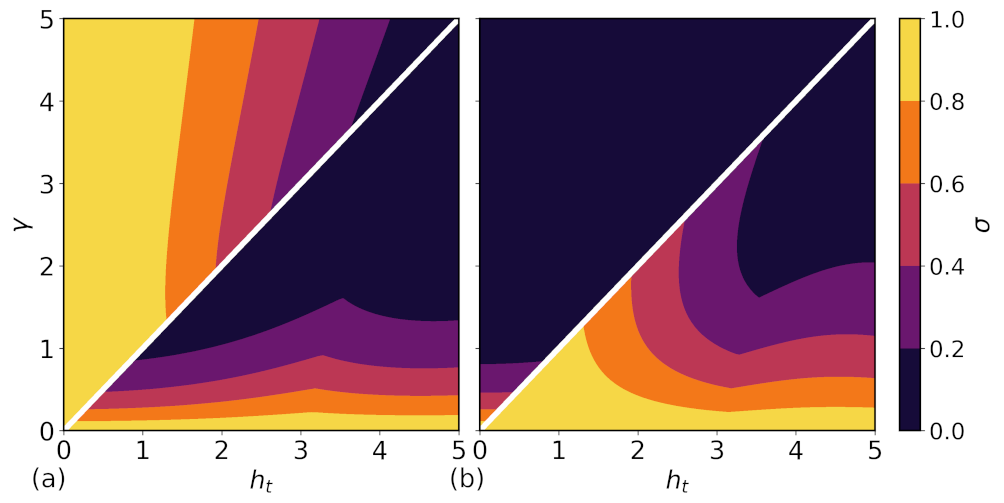}
\caption{Density plot of the singular values (a) $\sigma_2$ and (b) $\sigma_3$. The diagonal elements $h_t=\gamma$ are not determined because of the denominator $1/(\gamma^2-h^2_t)$.}\label{Fig:1}
\end{center}
\end{figure}

Figure \ref{Fig:1} numerically shows that for $\gamma\neq h_t$ and away from the axes $h_t=0$ and $\gamma=0$, we find $\sigma_2, \sigma_3<1$.
Therefore, the system has the ESP and the FMP. As we showed in Theorem \ref{th:unital}, the corresponding filter~\eqref{eq:filter_rho} is necessarily trivial and given by
\begin{equation}
    U _T({\bf z})_t = \rho^*=\left( \begin{matrix} 1/2 & 0  \\ 0 & 1/2  \end{matrix} \right).
\end{equation}
Since the Bloch vector for this constant matrix is  $(0,0,0)^{\top}$, this shows that in the SAS representation $U _{\widehat{T}_0}({\bf z})_t=(0,0,0)^{\top} $. 
\end{example}

Unital quantum channels are very common in the quantum information literature because of their practical advantages and well-known mathematical properties, see for example \cite{mendl2009unital,watrous2018theory}, and references therein. However, Theorem \ref{th:unital} discards the possibility of relying on unital contractive channels for QRC for long input sequences (see also Section \ref{sec:discussion} for more details). The physical explanation for this behavior stems from the fact that the fixed point of these maps does not depend on the input. Then, after each application of the channel, decoherence always leads to the same stationary state (the maximally mixed state), which does not keep track of these inputs. This hinders any possibility of storing the input information into the degrees of freedom of the quantum system since all coherences fade out and the diagonal elements of the density matrix become equal. 

We believe that the observation that we just made is very relevant, namely that quantum channels with input-independent fixed points become memoryless in the long-term. This fact is proved in the next theorem that generalizes Theorem \ref{th:unital}.

\begin{theorem} 
\label{th:constant}
Let $T: \mathcal{B}(\mathcal{H}) \times D_n\rightarrow \mathcal{B}(\mathcal{H})$ be a QRC system for which there exists an operator norm $\vertiii{\cdot } $ and $\epsilon>0 $ such that 
$
\vertiii{T(\cdot  , {\bf z})|_{\mathcal{B}_0(\mathcal{H})}}<1- \epsilon$,  for all ${\bf z} \in D_n$. Then, $T$ has an input-independent fixed point $\rho^\ast  \in \mathcal{S} ({\mathcal H})$, that is, $T(\rho^*,{\bf z})=\rho^*$, for all ${\bf z}\in D_n$, if and only if the corresponding filter $U _T  $ is constant, that is, $U_T({\bf z})_t=\rho^\ast  \in \mathcal{S} ({\mathcal H})$, for all ${\bf z}\in (D_n)^{\mathbb{Z}}$ (equivalently, $U_{\widehat{T} _0}({\bf z})_t= i _0^{-1}\left(G_{\mathcal{B}}^{-1} (\rho^\ast )\right)$) .
\end{theorem}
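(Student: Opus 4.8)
The plan is to carry over the argument of Theorem~\ref{th:unital} almost verbatim, the only new ingredient being a recentering of the SAS dynamics around the (now possibly nonzero) Bloch vector of the fixed point. By the equivalences in Proposition~\ref{prop:sufficient}, the contractivity hypothesis $\vertiii{T(\cdot, {\bf z})|_{\mathcal{B}_0(\mathcal{H})}} < 1 - \epsilon$ is the same as $\vertiii{p({\bf z})} < 1 - \epsilon$ for a suitable matrix norm, and it guarantees that the ESP and FMP hold, so the filter $U_T$ exists; moreover $U_T = {\cal G}_{\mathcal{B}} \circ {\cal I}_0 \circ U_{\widehat{T}_0}$ is determined by the SAS filter $U_{\widehat{T}_0}$. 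It therefore suffices to work entirely in the SAS representation \eqref{eq:x}.

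For the forward implication, I would write $\mathbf{x}^\ast = i_0^{-1}(G_{\mathcal{B}}^{-1}(\rho^\ast))$ for the Bloch vector of the fixed point, which is well defined since $\tr(\rho^\ast)=1$ places $\rho^\ast$ in the affine slice with first coordinate $1/\sqrt{d}$. Under the system isomorphism the hypothesis $T(\rho^\ast, {\bf z}) = \rho^\ast$ translates into $p({\bf z})\mathbf{x}^\ast + q({\bf z}) = \mathbf{x}^\ast$ for every ${\bf z} \in D_n$, equivalently $q({\bf z}) = (I - p({\bf z}))\mathbf{x}^\ast$. Substituting this into \eqref{eq:x} and introducing the shifted variable $\mathbf{y}_t := \mathbf{x}_t - \mathbf{x}^\ast$, a direct computation collapses the affine recursion into the homogeneous one $\mathbf{y}_t = p({\bf z}_t)\mathbf{y}_{t-1}$. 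This is exactly the situation met in the proof of Theorem~\ref{th:unital} (recovered by $\rho^\ast = I/d$, $\mathbf{x}^\ast = {\bf 0}$): under $\vertiii{p({\bf z})} < 1 - \epsilon$ the only bounded bi-infinite solution is $\mathbf{y}_t = {\bf 0}$ for all $t$, so $U_{\widehat{T}_0}({\bf z})_t = \mathbf{x}^\ast$ and, mapping back with ${\cal G}_{\mathcal{B}} \circ {\cal I}_0$, one obtains $U_T({\bf z})_t = \rho^\ast$.

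The converse is immediate and mirrors the converse in Theorem~\ref{th:unital}: if $U_T({\bf z})_t = \rho^\ast$ for all ${\bf z} \in (D_n)^{\mathbb{Z}}$ and all $t$, the filter recursion $U_T({\bf z})_t = T(U_T({\bf z})_{t-1}, {\bf z}_t)$ forces $\rho^\ast = T(\rho^\ast, {\bf z}_t)$; since an arbitrary $w \in D_n$ can be placed at position $t$ of some input sequence, this yields $T(\rho^\ast, w) = \rho^\ast$ for all $w \in D_n$, which is the input-independent fixed-point condition.

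I do not expect a genuine obstacle here, as the statement is a routine generalization of Theorem~\ref{th:unital}; the only point deserving care is the recentering step. One must note that the homogeneous recursion $\mathbf{y}_t = p({\bf z}_t)\mathbf{y}_{t-1}$ is to be analyzed in the ambient space $\mathbb{C}^{d^2-1}$ rather than in $V_0$, since the shifted iterates need not remain in $V_0$. This is harmless because $\vertiii{p({\bf z})} < 1 - \epsilon$ holds on all of $\mathbb{C}^{d^2-1}$ and already drives the telescoped product $\prod_{j=0}^{n-1} p({\bf z}_{t-j})$ to zero, exactly as in the necessity argument of Proposition~\ref{prop:sufficient}.
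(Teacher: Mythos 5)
Your proof is correct, but your forward implication takes a genuinely different route from the paper's own proof of this theorem. The paper disposes of that direction in two lines without touching the SAS coordinates at all: since $T(\rho^\ast,{\bf z})=\rho^\ast$ for all ${\bf z}\in D_n$, the constant sequence with value $\rho^\ast$ is a solution of the state equation for \emph{every} input ${\bf z}\in (D_n)^{\mathbb{Z}}$, and the ESP (guaranteed by the contraction hypothesis via Proposition \ref{prop:sufficient}) asserts uniqueness of solutions, so $U_T({\bf z})_t=\rho^\ast$ immediately. Your recentering argument---translating by $\mathbf{x}^\ast$, collapsing \eqref{eq:x} to the homogeneous recursion $\mathbf{y}_t=p({\bf z}_t)\mathbf{y}_{t-1}$, and using $\vertiii{p({\bf z})}<1-\epsilon$ together with boundedness of solutions (forced by compactness of $V_0$, modulo your correct remark that the shifted iterates must be analyzed in the ambient $\mathbb{C}^{d^2-1}$) to conclude $\mathbf{y}_t={\bf 0}$---in effect re-derives by hand the uniqueness that the ESP already supplies; it is the technique the paper uses for Theorem \ref{th:unital}, transplanted here. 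What each approach buys: the paper's argument is shorter and representation-free, making clear that this theorem needs nothing beyond existence-plus-uniqueness of solutions; yours makes the structural parallel with the unital case explicit and additionally exhibits the geometric rate $(1-\epsilon)^n$ at which the dependence on initial conditions is forgotten. Your converse coincides with the paper's.
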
 

\begin{proof}  
We first note that the contractivity hypothesis implies by Proposition \ref{prop:sufficient} that the system associated to $T$ has the ESP and hence has a unique solution for each input. The hypothesis $T(\rho^*,{\bf z})=\rho^*$, for all ${\bf z}\in D_n$, obviously implies that the constant sequence equal to $\rho^* $ is a solution for any input ${\bf z} \in (D _n)^{\mathbb{Z}} $ and hence  $U_T({\bf z})_t=\rho^\ast  \in \mathcal{S} ({\mathcal H})$, for all ${\bf z}\in (D_n)^{\mathbb{Z}}$. Conversely, since the filter $U _T $ is determined by the recursions
\begin{equation}
\label{intermediate U}
U_T({\bf z})_t=T \left(U_T({\bf z})_{t-1}, {\bf z}_t\right),
\end{equation}
then, if we always have that $U_T({\bf z})_t=\rho^\ast $, the relation \eqref{intermediate U} implies that $T(\rho^*,{\bf z})=\rho^*$, for all ${\bf z}\in D_n$.
\end{proof}

\begin{remark}
\normalfont
The differences between the hypotheses in Theorems \ref{th:unital} and \ref{th:constant} on fixed points are apparent when the QRC system is expressed using the SAS representation in terms of the functions $q $ and $p$. Indeed, as we saw in the proof of Theorems \ref{th:unital}, $q({\bf z})=0$ for and ${\bf z} \in D_n $, in that case, and the input dependence takes place only through $p$. This is the case in Example \ref{ex:dep}. Theorem  \ref{th:constant} allows for an input dependence through $q$ too.  
\end{remark}

\begin{example}
\label{ex:bad}
\normalfont    
We define a Markovian master equation for the dynamics between input injections:
\begin{equation}
    \dot{\rho}=-i[H({\bf z}_t),\rho]+\gamma L\rho L^{\dagger}-\frac{\gamma}{2}\{L^{\dagger}L,\rho\}, 
\end{equation}
where $H({\bf z}_t)=h({\bf z}_t)\sigma^z/2$ and $L=\sigma^-$. This is a single qubit under the influence of an external magnetic field in the $z$ direction with local dissipation.
 The  matrix expression $\widehat{T}$ for the associated system is:
 \begin{widetext}
\begin{equation}
  \left( \begin{matrix} 1   \\ 
  \braket{\sigma^x}_t \\
  \braket{\sigma^y}_t \\
  \braket{\sigma^z}_t
  \end{matrix} \right)  
  =\left( \begin{matrix} 1 & 0 & 0 & 0  \\ 
   0 & e^{-\frac{\gamma \Delta \tau }{2}}\cos(h_t\Delta \tau ) & e^{-\frac{\gamma \Delta \tau }{2}}\sin(h_t\Delta \tau ) & 0 \\
   0 & -e^{-\frac{\gamma \Delta \tau }{2}}\sin(h_t\Delta \tau ) & e^{-\frac{\gamma \Delta \tau }{2}}\cos(h_t\Delta \tau ) & 0 \\
   e^{-\gamma \Delta \tau }-1 &  0 &  0 & e^{-\gamma \Delta \tau }
  \end{matrix} \right) \left( \begin{matrix} 1   \\ 
  \braket{\sigma^x}_{t-1} \\
  \braket{\sigma^y}_{t-1} \\
  \braket{\sigma^z}_{t-1}
  \end{matrix} \right).    
\end{equation}
\end{widetext}
The eigenvalues of  $\widehat{T}$ can be computed analytically: $\lambda_1=1$, $\lambda_2=e^{-\gamma \Delta \tau }$, $\lambda_3=e^{-\frac{\gamma \Delta \tau }{2}-ih_t\Delta \tau }$ and $\lambda_4=e^{-\frac{\gamma \Delta \tau }{2}+ih_t\Delta \tau }$. The moduli of the eigenvalues are $|\lambda_1|=1$, $|\lambda_2|=e^{-\gamma \Delta \tau }<1$ and $|\lambda_3|=|\lambda_4|=e^{-\frac{\gamma \Delta \tau }{2}}<1$, so $T$ is a mixing channel.  In this case, the single fixed point is a pure input-independent state with density matrix 
\begin{equation}
    \rho^*=\left( \begin{matrix} 0 & 0  \\ 0 & 1  \end{matrix} \right).
\end{equation}

Let us see if it is true that there exists some $\vertiii{p(\textbf{z})}<1$ for all inputs. The singular values of the restriction 
\begin{equation}\label{eq:p}
 p(\textbf{z})=\left( \begin{matrix} 
     e^{-\frac{\gamma \Delta \tau }{2}}\cos(h_t\Delta \tau ) & e^{-\frac{\gamma \Delta \tau }{2}}\sin(h_t\Delta \tau ) & 0 \\
     -e^{-\frac{\gamma \Delta \tau }{2}}\sin(h_t\Delta \tau ) & e^{-\frac{\gamma \Delta \tau }{2}}\cos(h_t\Delta \tau ) & 0 \\  0 &  0 & e^{-\gamma \Delta \tau }
  \end{matrix} \right)
\end{equation}
are $\sigma_1=e^{-\gamma \Delta \tau }<1$ and $\sigma_2=\sigma_3=e^{-\frac{\gamma \Delta \tau }{2}}<1$. Therefore, the system has the ESP and FMP. As we showed in Theorem \ref{th:constant}, the associated filter is necessarily constant, and \eqref{eq:filter_rho} necessarily yields the filter 
\begin{equation}
U _T({\bf z})_t = \rho^*=\left( \begin{matrix} 0 & 0  \\ 0 & 1  \end{matrix} \right).
\end{equation}
Since the Bloch vector for this constant matrix is  $(0,0,-1)^{\top}$, this shows that in the SAS representation $U _{\widehat{T}_0}({\bf z})_t=(0,0,-1)^{\top} $. 

We can double-check the solution by explicitly computing the filter~\eqref{eq:filter_x}. Take $p({\bf z}_t)$ as in \eqref{eq:p} and $q({\bf z}_t)^{\top}=(0,0,e^{-\gamma \Delta \tau }-1)$. Since $\sigma_{\text{max}}(p({\bf z}_t))<1$, the filter in~\eqref{eq:filter_x} exists. Using now that $q({\bf z}_t)$ is input-independent, we can write 
\begin{equation}
\begin{split}
U_{\widehat{T} _0}({\bf z})_t &=\sum^{\infty}_{j=0}\left(\prod^{j-1}_{k=0}p({\bf z}_{t-k})\right)q({\bf z}_{t-j})\\
    &=\left(\sum^{\infty}_{j=0}\prod^{j-1}_{k=0}p({\bf z}_{t-k})\right)q=(e^{-\gamma \Delta \tau }-1)M_3,       
\end{split}
\end{equation}
where $M_3$ is the third column of the matrix $M=\sum^{\infty}_{j=0}\prod^{j-1}_{k=0}p({\bf z}_{t-k})$. Notice that the third column of the product $\prod^{j-1}_{k=0}p({\bf z}_{t-k})$ is $(0,0,e^{-k\gamma \Delta \tau })^{\top}$. Then, the column $M_3$ equals
\begin{equation}
    M_3 = \left(\begin{matrix} 0   \\ 0 \\ \sum^{\infty}_{j=0} e^{-k\gamma\Delta \tau }  \end{matrix} \right)=\left(\begin{matrix} 0   \\ 0 \\ \frac{1}{1-e^{-\gamma\Delta \tau }}  \end{matrix} \right),
\end{equation}
and it hence follows that $U _{\widehat{T}_0}({\bf z})_t=(0,0,-1)^{\top} $.

To conclude, we show in Figure \ref{Fig:bad} a numerical example of input driving. We choose $h(z_t)=z_t\frac{h}{2}\sigma^z$ as the external magnetic field function, where $h$ is a constant and $z_t$ is a unidimensional random input. The input will be drawn from a random uniform distribution in the interval $[0,1]$. As can be seen, the observables exhibit a transient time after which they converge to the input-independent stationary state. 

\begin{figure}[h]
\captionsetup[subfigure]{}
\begin{center}
\includegraphics[scale=0.9]{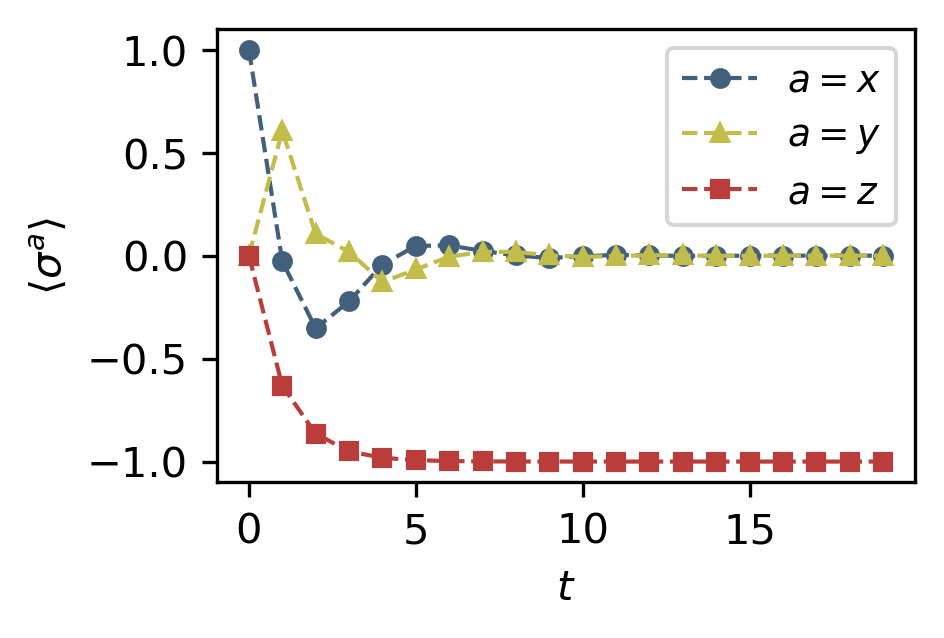}
\caption{Dynamics of the spin projections $\braket{\sigma^a}_t$ for $a=x,y,z$ when driven with a random input sequence with Hamiltonian $H=z_t\frac{h}{2}\sigma^z$. The initial condition is the maximal coherent state $\rho=1/2\sum^1_{i,j=0}\ket{i}\bra{j}$ and the system parameters are $\Delta \tau=1$, $\gamma=1$ and $h=1$.}\label{Fig:bad}
\end{center}
\end{figure}

\end{example}

\section{Discussion}\label{sec:discussion}
Theorems \ref{th:unital} and \ref{th:constant} bring up a connection between long-term computation and noisy intermediate-scale quantum (NISQ) devices: they have a finite time of operation due to decoherence. Consider a model of the type 
\begin{equation}\label{eq:deco}
    \rho_t=T(\rho _t,{\bf z} _t)=\mathcal{E}_{\text{deco}}(\mathcal{U}({\bf z}_t)\rho_{t-1}\mathcal{U}^{\dagger}({\bf z}_t)),
\end{equation}
where $\mathcal{E}_{\text{deco}}$ represents the decoherence produced by the contact of the system with an external environment. 
This model is present in QRC experimental works like \cite{kubota2022quantum,suzuki2022natural}, where the unitary dynamics $\mathcal{U}({\bf z}_t)$ is given by a quantum circuit. Theorem \ref{th:unital} explains what happens in the extreme case in which the quantum noise of the device is unital.
If the decoherence channel $\mathcal{E}_{\text{deco}}$ is a unital strictly contractive map then
$$T\left(\frac{I}{d},{\bf z} _t\right)=\mathcal{E}_{\text{deco}}\left(\mathcal{U}({\bf z}_t)\frac{I}{d}\mathcal{U}^{\dagger}({\bf z}_t)\right)=\frac{\mathcal{E}_{\text{deco}}(I)}{d}=\frac{I}{d},$$ 
that is, $T(\cdot,{\bf z})$ becomes a unital strictly contractive map for all ${\bf z}\in D_n$. Theorem \ref{th:unital} shows, in this case, that the filter becomes trivial after the injection of long input sequences. Instances of unital decoherence can be found in depolarizing channels, like in Example \ref{ex:dep}, or in dephasing channels. A dephasing channel damps the coherences of the density matrix but does not affect the diagonal elements. 

More generally, Theorem \ref{th:constant} can be interpreted as a ``common sense" warning: it explicitly states that the input codification must have a measurable influence on the attractor of the natural dynamics of the CPTP map. Otherwise, there is no possibility of storing input information in the long run. We emphasize that this is independent of the type of dissipation that the quantum channel produces. We can connect the implications of this theorem with the NISQ discussion and  \eqref{eq:deco}. If the input is codified in some particular coherences of the system, one must be careful that decoherence does not destroy those matrix elements, because then input-dependence would vanish, and the resulting filter would become trivial. 

This discussion does not imply that models like \eqref{eq:deco} are useless. Indeed, the opposite has been proven in previous works for short-term memory tasks \cite{kubota2022quantum,suzuki2022natural}. Theorems \ref{th:unital} and \ref{th:constant} only rigorously establish something that was already known about NISQ devices, that is, that there is a coherence time in which they can be exploited. Equivalently, models affected by these theorems have the ESP, but the resulting input/output dynamics becomes trivial for long input sequences. 

The coherence time limitation affects to all quantum platforms to a greater or lesser extent. Then, either QRC proposals are subject to operate on shorter time scales than the natural noise time scale (as done in \cite{kubota2022quantum,suzuki2022natural}), or the QRC system is carefully design to integrate it. For example, as we will see in Section \ref{ex:good}, dephasing can be integrated as part of a QRC system that is not affected by Theorems \ref{th:unital} and \ref{th:constant}.

We could further extend this analysis to QRC models with measurements. As an example we take the model proposed in \cite{mujal2022time}. In this reference, the quantum measurement is applied at each time step after the input dependent CPTP map $T$. The measurement scheme is introduced by modeling an indirect measurement with a continuous-variable ancilla \cite{naghiloo2019introduction,SiddiqiNature,SiddiqiQtrajectories,Lecocq2021,Hatridge2013}, producing a quantum reservoir with stochastic dynamics. For simplicity, we will restrict ourselves to the case of a single qubit, and we will average the quantum states over the limit of infinite measurements, yielding an unconditional state which is led by a deterministic CPTP quantum channel \cite{wiseman2009quantum}. Besides, we choose, without loss of generality, to take measurements in the $z$ direction. Under these conditions, the CPTP map is 
\begin{equation}
\label{eqrhoidealwithg}
\rho_t=M\odot T(\rho_{t-1},\textbf{z}_t),
\end{equation}
where $\odot$ represents the Hadamard or element-wise matrix product and $M$ is defined as
\begin{equation}\label{eq:M}
   M=\begin{pmatrix}
1&e^{-\frac{g^2}{2}}\\
e^{-\frac{g^2}{2}}&1
\end{pmatrix}.
\end{equation}
The measurement strength $g$ allows us to quantify the decoherence introduced by sharp measurements ($g\gg1$), while for $g\ll1$ the state is weakly perturbed. It is straightforward to see that this model is introducing dephasing, such that we can rewrite ~\eqref{eqrhoidealwithg} as 
\begin{equation}\label{eq:deph_T}
 \rho_t=\mathcal{E}_{\text{deph}}\left( T(\rho_{t-1},\textbf{z}_t)\right),  
\end{equation}
where the dephasing channel $\mathcal{E}_{\text{deph}}$ is defined as
\begin{equation}\label{eq:deph}
    \mathcal{E}_{\text{deph}}(\rho)=\frac{1+e^{-\frac{g^2}{2}}}{2}\rho+\frac{1-e^{-\frac{g^2}{2}}}{2}\sigma^z\rho\sigma^z.
\end{equation}
As we explained above, unitary dynamics for the map $T$ would lead to a memoryless reservoir in the long-term. However, one could engineer a mixing map $T$ such that there is a competition between the attractors of maps $T$ and $\mathcal{E}_{\text{deph}}$. The final fixed point of~\eqref{eq:deph_T} would be somewhere between the original fixed point of $T$ and a diagonal state (which is the shape of the fixed points of $\mathcal{E}_{\text{deph}}$).  

We conclude by presenting an example of a qubit with tunable local dissipation that fulfills all the requirements to be a ``properly engineered" QRC system. Then, we extend it with the measurement model of \cite{mujal2022time} to show that it still constitutes a proper QRC system.

\subsection{A ``properly engineered" QRC system} 
\label{ex:good}
We start by introducing the model without measurements. The Markovian master equation that governs the dynamics between input injections is:
\begin{equation}
    \dot{\rho}=-i[H({\bf z}_t),\rho]+\gamma L\rho L^{\dagger}-\frac{\gamma}{2}\{L^{\dagger}L,\rho\}, 
\end{equation}
where $H({\bf z}_t)=h({\bf z}_t)\sigma^x/2$, and $L=\sigma^-$. 
 The corresponding matrix  expression $\widehat{T}$ is
\begin{equation}\label{eq:T_good}
  \left( \begin{matrix} 1   \\ 
  \braket{\sigma^x}_t \\
  \braket{\sigma^y}_t \\
  \braket{\sigma^z}_t
  \end{matrix} \right) =\left( \begin{matrix} 1 & 0 & 0 & 0  \\ 
 
   0 & \widehat{T}_{22} & 0 & 0 \\
   \widehat{T}_{31} & 0 &  \widehat{T}_{33} & \widehat{T}_{34} \\
   \widehat{T}_{41} &  0 &  \widehat{T}_{43} & \widehat{T}_{44}
  \end{matrix} \right) \left( \begin{matrix} 1   \\ 
  \braket{\sigma^x}_{t-1} \\
  \braket{\sigma^y}_{t-1} \\
  \braket{\sigma^z}_{t-1}
  \end{matrix} \right),
\end{equation}
where the expressions for the matrix elements are shown below:
\begin{widetext}
\begin{equation}
    \begin{matrix}
        \widehat{T}_{22}=e^{-\frac{\gamma \Delta \tau }{2}}, \\
        \widehat{T}_{33} = e^{-\frac{3\gamma \Delta \tau }{4}}\left(\cosh\left(\frac{\Delta \tau }{4}\sqrt{\gamma^2-16h_t^2}\right)+\frac{\gamma}{\sqrt{\gamma^2-16h_t^2}}\sinh\left(\frac{\Delta \tau }{4}\sqrt{\gamma^2-16h_t^2}\right)\right),  \\
       \widehat{T}_{44} =  e^{-\frac{3\gamma \Delta \tau }{4}}\left(\cosh\left(\frac{\Delta \tau }{4}\sqrt{\gamma^2-16h_t^2}\right)-\frac{\gamma}{\sqrt{\gamma^2-16h_t^2}}\sinh\left(\frac{\Delta \tau }{4}\sqrt{\gamma^2-16h_t^2}\right)\right), \\
        \widehat{T}_{34} = \frac{4h_te^{-\frac{3\gamma \Delta \tau }{4}}}{\sqrt{\gamma^2-16h_t^2}}\sinh\left(\frac{\Delta \tau }{4}\sqrt{\gamma^2-16h_t^2}\right),\\
        \widehat{T}_{43} = - \widehat{T}_{34}, \\
        \widehat{T}_{31} = \frac{2\gamma h_t}{\gamma^2+2h^2_t}\left\{-1+e^{-\frac{3\gamma \Delta \tau }{4}}\left(\cosh\left(\frac{\Delta \tau }{4}\sqrt{\gamma^2-16h_t^2}\right)+\frac{3\gamma}{\sqrt{\gamma^2-16h_t^2}}\sinh\left(\frac{\Delta \tau }{4}\sqrt{\gamma^2-16h_t^2}\right)\right)\right\}, \\
        \widehat{T}_{41}= \frac{\gamma}{\gamma^2+2h^2_t}\left\{-\gamma +e^{-\frac{3\gamma \Delta \tau }{4}}\left(\gamma\cosh\left(\frac{\Delta \tau }{4}\sqrt{\gamma^2-16h_t^2}\right)-\frac{\gamma^2+8h^2_t}{\sqrt{\gamma^2-16h_t^2}} \sinh\left(\frac{\Delta \tau }{4}\sqrt{\gamma^2-16h_t^2}\right)\right)\right\}.
    \end{matrix}
\end{equation}
\end{widetext}

\begin{figure}[h]
\captionsetup[subfigure]{}
\begin{center}
\includegraphics[scale=1.05]{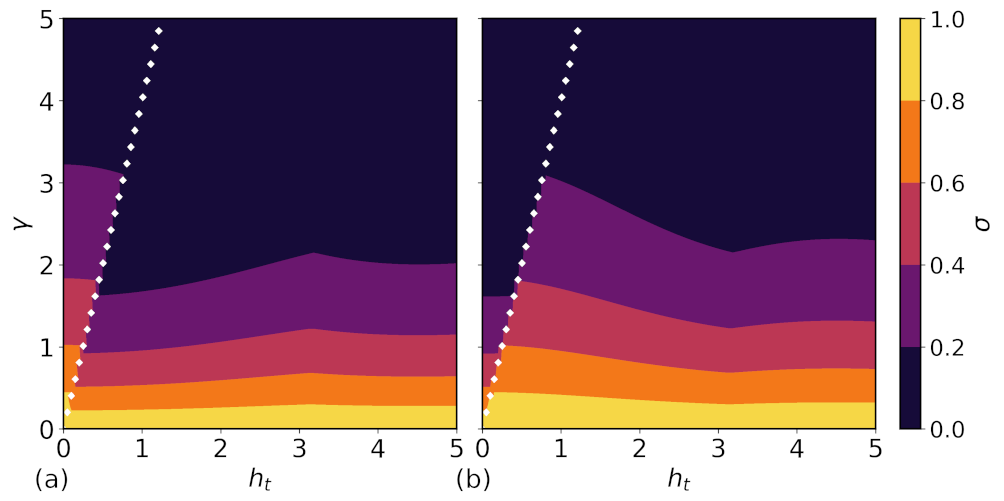}
\caption{Density plot of the singular values (a) $\sigma_2$ and (b) $\sigma_3$. The elements $\gamma=4h_t$ are not determined because of the denominator $1/(\gamma^2-16h^2_t)$.}\label{Fig:2}
\end{center}
\end{figure}
The eigenvalues of matrix $\hat{T}$ are: $\lambda_1=1$, $\lambda_2=e^{-\frac{\gamma \Delta \tau }{2}}$, $\lambda_3=e^{-(3\gamma^2+\sqrt{\gamma^2-16h^2_t})\frac{\Delta \tau }{4}}$ and $\lambda_4=e^{-(3\gamma^2-\sqrt{\gamma^2-16h^2_t})\frac{\Delta \tau }{4}}$. The three eigenvalues $\lambda_2$, $\lambda_3$ and $\lambda_4$ have always modulus smaller than one when $\gamma\neq 0$. Then, the master equation fulfills the conditions for having a single full-rank fixed point \cite{nigro2019uniqueness}, whose density matrix is
\begin{equation}\label{eq:fixed_good}
    \rho^*=\frac{1}{\gamma^2+2h^2_t} \left( \begin{matrix} h^2_t & i\gamma h_t  \\ -i\gamma h_t & \gamma^2 +h^2_t  \end{matrix} \right).
\end{equation}
Finally, we compute the singular values of the restriction to the traceless hyperplane. These values are $\sigma_1=e^{-\frac{\gamma \Delta \tau }{2}}<1$, $\sigma_2=e^{-\frac{3\gamma \Delta \tau }{4}}\sqrt{f_+}$ and $\sigma_3=e^{-\frac{3\gamma \Delta \tau }{4}}\sqrt{f_-}$, where
\begin{widetext}
\begin{equation}
\begin{split}
    f_\pm &=\frac{1}{\gamma^2-16h^2_t}\left(-16h^2_t+\gamma^2\cosh\left(\frac{\Delta \tau }{2}\sqrt{\gamma^2-16h_t^2}\right)\right. \\    
     &\left.\pm\gamma\sqrt{\sinh^2\left(\frac{\Delta \tau }{4}\sqrt{\gamma^2-16h_t^2}\right)}\sqrt{-64h^2_t+2\gamma^2+2\gamma^2\cosh\left(\frac{\Delta \tau }{2}\sqrt{\gamma^2-16h_t^2}\right)}\right).       
\end{split}
\end{equation}  
\end{widetext}
Figure \ref{Fig:2} shows that for $\gamma\neq 4h_t$ and away from the axis $\gamma=0$, we find $\sigma_2, \sigma_3<1$, demonstrating the ESP and the FMP. Since the fixed point $\rho^*$ is input dependent, this system exhibits the necessary ingredients to be a competent QRC system.

As in Example \ref{ex:bad}, we show in Figure \ref{Fig:good} a numerical example of input driving. We choose again $h(z_t)=z_t\frac{h}{2}\sigma^x$ as the external magnetic field function, modifying its direction. Now, the observables $\braket{\sigma^z}$ and  $\braket{\sigma^y}$ exhibit an explicit response to the driving, while $\braket{\sigma^x}$ converges to its input-independent stationary value (which can be predicted from \eqref{eq:T_good}). 
\begin{figure}[h]
\captionsetup[subfigure]{}
\begin{center}
\includegraphics[scale=0.9]{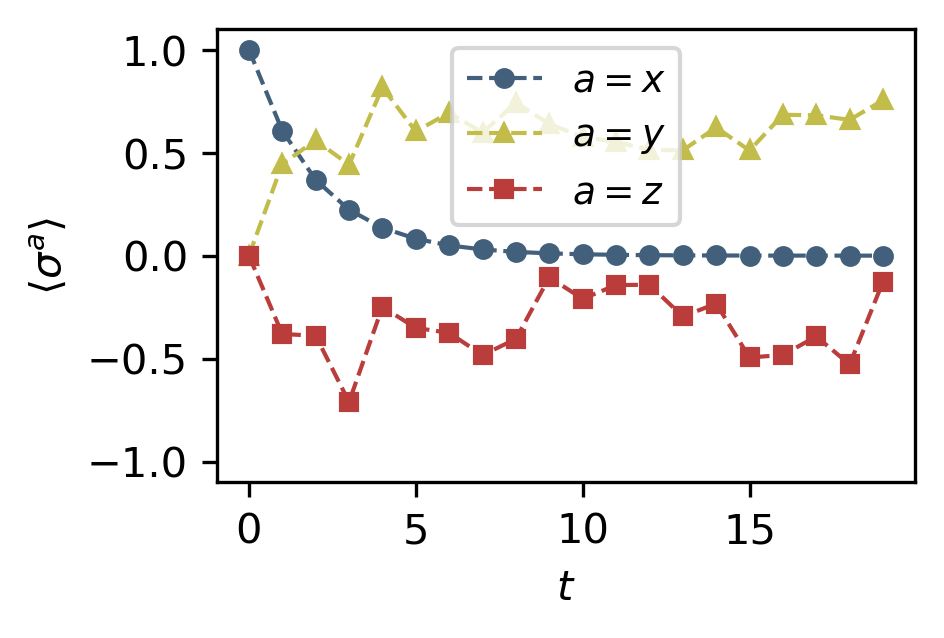}
\caption{Dynamics of the spin projections $\braket{\sigma^a}_t$ for $a=x,y,z$ when driven with a random input sequence with Hamiltonian $H=z_t\frac{h}{2}\sigma^x$. The initial condition is the maximal coherent state $\rho=1/2\sum^1_{i,j=0}\ket{i}\bra{j}$ and the system parameters are $\Delta \tau=1$, $\gamma=1$ and $h=1$.}\label{Fig:good}
\end{center}
\end{figure}

Now we further extend the model to incorporate the measurement formalism described in \cite{mujal2022time}. As the composition of CPTP maps can be described as the product of their matrix representations \cite{wolf2012quantum}, we just need to obtain the Kraus operators of ~\eqref{eq:deph}, which are $K_0=\sqrt{e^{-\frac{g^2}{2}}}I$, $K_1=\sqrt{1-e^{-\frac{g^2}{2}}}\ket{0}\bra{0}$ and 
$K_2=\sqrt{1-e^{-\frac{g^2}{2}}}\ket{1}\bra{1}$, where $\ket{0}$ and $\ket{1}$ are the basis states in the $z$ axis.  The matrix representation of $\mathcal{E}_{\text{deph}}$ in the Pauli matrix basis is then
\begin{equation}
    \widehat{\mathcal{E}}_{\text{deph}}=\left( \begin{matrix} 1 & 0 & 0 & 0  \\ 
   0 & e^{-\frac{g^2}{2}} & 0 & 0 \\
   0 & 0 &  e^{-\frac{g^2}{2}} & 0 \\
   0 &  0 &  0 & 1
  \end{matrix} \right).
\end{equation}
It is straightforward to check that the maximum singular value of $\widehat{\mathcal{E}}_{\text{deph}}$ restricted to the traceless hyperplane is equal to one. The final matrix $\widehat{T}'=\widehat{\mathcal{E}}_{\text{deph}}\widehat{T}$ is
\begin{equation}
  \widehat{T}'=\left( \begin{matrix} 1 & 0 & 0 & 0  \\ 
   0 & e^{-\frac{g^2}{2}}\widehat{T}_{22} & 0 & 0 \\
   e^{-\frac{g^2}{2}}\widehat{T}_{31} & 0 &  e^{-\frac{g^2}{2}}\widehat{T}_{33} & e^{-\frac{g^2}{2}}\widehat{T}_{34} \\
   \widehat{T}_{41} &  0 &  \widehat{T}_{43} & \widehat{T}_{44}
  \end{matrix} \right).
\end{equation}
Given that the maximum singular value of $p(\textbf{z})$ is smaller than one (for $\gamma\neq 4h_t$ and $\gamma\neq 0$), we find that  $\vertiii{p'(\textbf{z})}_2\leq \vertiii{p_{\text{deph}}}_2\cdot \vertiii{p(\textbf{z})}_2<1-\epsilon$ for some $\epsilon>0$ in the 2-Schatten norm, where 
 $p(\textbf{z})$, $p'(\textbf{z})$ and $p_{\text{deph}}$ are the restrictions to the traceless hyperplane of matrices $\widehat{T}$, $\widehat{T}'$ and $\widehat{\mathcal{E}}_{\text{deph}}$ respectively  (given by \eqref{eq:p_ij}). Then, the QRC system has the ESP and the FMP. The single fixed point of the map is given by
\begin{equation}
    \rho^*=\frac{1}{\gamma^2+2h^2_t} \left( \begin{matrix} h^2_t-f_1 & i\gamma h_t(1-f_2) \\ -i\gamma h_t(1-f_2) & \gamma^2 +h^2_t+f_1  \end{matrix} \right),
\end{equation} 
where 
\begin{widetext}
\begin{equation}
\begin{split}
    &f_1 =\frac{4\gamma h^2_t\sinh\left(\frac{g^2}{4}\right)\sinh\left(\frac{\Delta \tau }{4}\sqrt{\gamma^2-16h_t^2}\right)}{\sqrt{\gamma^2-16h_t^2}\left(\cosh\left(\frac{g^2+3\gamma\Delta \tau }{4}\right)-\cosh\left(\frac{g^2}{4}\right)\cosh\left(\frac{\Delta \tau }{4}\sqrt{\gamma^2-16h_t^2}\right)\right)+\gamma\sinh\left(\frac{g^2}{4}\right)\sinh\left(\frac{\Delta \tau }{4}\sqrt{\gamma^2-16h_t^2}\right)}, \\
    &f_2 = \frac{\sinh\left(\frac{g^2}{4}\right)\left(e^{\frac{3\gamma\Delta \tau }{4}}-\cosh\left(\frac{g^2}{4}\right)\cosh\left(\frac{\Delta \tau }{4}\sqrt{\gamma^2-16h_t^2}\right)+\frac{\gamma}{\sqrt{\gamma^2-16h_t^2}}\sinh\left(\frac{g^2}{4}\right)\sinh\left(\frac{\Delta \tau }{4}\sqrt{\gamma^2-16h_t^2}\right)\right)}{\cosh\left(\frac{g^2+3\gamma\Delta \tau }{4}\right)-\cosh\left(\frac{g^2}{4}\right)\cosh\left(\frac{\Delta \tau }{4}\sqrt{\gamma^2-16h_t^2}\right)+\frac{\gamma}{\sqrt{\gamma^2-16h_t^2}}\sinh\left(\frac{g^2}{4}\right)\sinh\left(\frac{\Delta \tau }{4}\sqrt{\gamma^2-16h_t^2}\right)}.
\end{split}
\end{equation}  
\end{widetext}
This fixed point is produced by the competition between the original fixed point in \eqref{eq:fixed_good} ($g\rightarrow 0$) and a diagonal state ($g\rightarrow \infty$). With this we can conclude that this engineered model, even including the measurement protocol, leads to an operational QRC system in the long-term run. 

\section{Conclusions}
\label{sec:conclusions}

In this paper, we have unified the density matrix approach of previous works in QRC with the Bloch vector representation. Moreover, we have shown that these representations are linked by system isomorphisms and that various results concerning the ESP and FMP are independent of the chosen representation. We have also observed that the QRC dynamics in the Bloch vectors representation amounts to that of a state-affine system (SAS) of the type introduced in \cite{grigoryeva2018universal} and for which numerous theoretical results have been established. We have capitalized on this connection to shed some light on fundamental questions in QRC theory in finite dimensions. In particular, we found a necessary and sufficient condition for the ESP and FMP in terms of the existence of an induced norm that bounds the CPTP map for all inputs, determining a guideline for its election. The necessity of this boundedness hypothesis emerges out of the compactness of the input space, which is a common requirement in the RC literature. If the input space is not compact, sufficient conditions can still be found in terms of the weighting sequence \cite{RC9}. Besides, we described common situations in which  QRC systems become useless in long term runs which can be summarized by saying that quantum channels that exhibit input-independent fixed points yield trivial input/output dynamics. 

Our work sets the grounds for further analysis and exploration of the QRC theory. Future work can follow several paths, such as studying the connection between spectral properties of QRC models and their performance in memory and information processing tasks, studying infinite-dimensional quantum reservoirs, or including generalized measurements (positive operator-valued measures) and the effect of a finite number of measurements in the statistics of expected values, given the effect that they imprint in the resources of QRC algorithms \cite{mujal2022time}.

\section{Acknowledgments}
We thank A. Sannia for useful discussion and inspiration for this work, and D. Burgarth and M. Rahaman for answering our questions regarding their work and its relation to our study. We also thank the editor and two anonymous referees for input that has significantly improved the paper. R.M.-P. and J.-P.O. acknowledge partial financial support from the Swiss National Science Foundation (Grant No. 200021 175801/1). 
R.M.-P. acknowledges the Spanish State Research Agency for support through the Severo Ochoa and Mar\'ia de Maeztu Program for Centers and Units of Excellence in R\&D (Grant No. MDM-2017-0711) and through the  QUARESC project (Projects No. 2019-109094GB-C21 and 2019-109094GB-C22/AEI/10.13039/501100011033). Part of this work was funded by MICINN/AEI/FEDER and the University of the Balearic Islands through a predoctoral fellowship (Grant No. MDM-2017-0711-18-1) for R.M.-P. R.M.-P. also is grateful for the hospitality of the Division of Mathematical Sciences of the Nanyang Technological University, where most of these results were obtained.

\end{document}